\crefname{algorithm}{Alg.}{Algs.}
\def\Snospace~{\S{}}
\newcommand{\E}{\mathbb{E}}
\newcommand{\indr}[1]{\mathbf{1}_{#1}}
\newcommand{\Tree}{\mathbb{T}}
\newcommand{\CM}{\mathcal{M}}
\newcommand{\CD}{\mathcal{D}}
\newcommand{\CO}{\mathcal{O}}
\newcommand{\DT}{\Tree^d}
\newcommand{\hDT}{\hat{\Tree}^d}
\newcommand{\AT}{\Tree^a}
\newcommand{\hAT}{\hat{\Tree}^a}
\newcommand{\OT}{\Tree^o}
\newcommand{\FO}{\text{FO}}
\newcommand{\dis}{\mathit{dis}}
\newcommand{\hdis}{\hat{\mathit{dis}}}
\newcommand{\err}{\mathit{err}}
\newcommand{\var}{\mathit{var}}
\newcommand{\bx}{\mathbf{x}}
\newcommand{\by}{\mathbf{y}}
\newcommand{\bc}{\mathbf{c}}
\begin{document}

\title{MTSP-LDP: A Framework for Multi-Task Streaming Data
Publication under Local Differential Privacy}


\author{CHANG LIU}
\affiliation{%
  \institution{Xi'an Jiaotong University}
  \city{SHAANXI}
  \country{China}}
\email{MollyLiu@stu.xjtu.edu.cn}
\orcid{0009-0008-2588-9949}

\author{JUNZHOU Zhao}
\affiliation{%
  \institution{Xi'an Jiaotong University}
  \city{SHAANXI}
  \country{China}}
\email{junzhou.zhao@xjtu.edu.cn}

\renewcommand{\shortauthors}{Liu and Zhao}

\begin{abstract}
The proliferation of streaming data analytics in data-driven applications raises
critical privacy concerns, as directly collecting user data may compromise
personal privacy.
Although existing $w$-event local differential privacy (LDP) mechanisms provide
formal guarantees without relying on trusted third parties, their practical
deployment is hindered by two key limitations.
First, these methods are designed primarily for publishing simple statistics at
each timestamp, making them inherently unsuitable for complex queries.
Second, they handle data at each timestamp independently, failing to capture
temporal correlations and consequently degrading the overall utility. To address these issues, we propose MTSP-LDP, a novel framework for
\textbf{M}ulti-\textbf{T}ask \textbf{S}treaming data \textbf{P}ublication under
$w$-event LDP.
MTSP-LDP adopts an \emph{Optimal Privacy Budget Allocation} algorithm to
dynamically allocate privacy budgets by analyzing temporal correlations within
each window.
It then constructs a \emph{data-adaptive private binary tree structure} to support complex queries, which is further refined by cross-timestamp grouping and smoothing operations to enhance estimation accuracy.
Furthermore, a unified \emph{Budget-Free Multi-Task Processing} mechanism is
introduced to support a variety of streaming queries without consuming additional privacy budget.
Extensive experiments on real-world datasets demonstrate that MTSP-LDP
consistently achieves high utility across various streaming tasks, significantly outperforming existing methods.
\end{abstract}

\begin{CCSXML}
<ccs2012>
 <concept>
  <concept_id>10002978.10002991.10002995</concept_id>
  <concept_desc>Security and privacy~Privacy-preserving protocols</concept_desc>
  <concept_significance>500</concept_significance>
 </concept>
 <concept>
    <concept_id>10002951.10002952</concept_id>
    <concept_desc>Information systems~Data management systems</concept_desc>
    <concept_significance>300</concept_significance>
 </concept>
</ccs2012>
\end{CCSXML}

\ccsdesc[500]{Security and privacy~Privacy-preserving protocols}
\ccsdesc[300]{Information systems~Data management systems}

\keywords{Data Streams, Local Differential Privacy}

\received{July 2025}
\received[revised]{October 2025}
\received[accepted]{November 2025}

\maketitle

\section{Introduction} \label{sec:introduction}

With the rapid development of the Internet of Things, massive real-time data
streams have become ubiquitous in a variety of modern applications such as
traffic management and intelligent parking~\cite{Babcock2002, Babu2001}.
However, the collection and analysis of sensitive user data, such as real-time
vehicle locations in a city and user purchasing history on e-commerce websites,
pose significant privacy risks~\cite{Al2022, Ray2020, Malhotra2021}.
To protect user privacy, differential privacy (DP)~\cite{Dwork2006} has emerged
as a cornerstone, offering rigorous guarantees by injecting calibrated noise into
user data.
While centralized differential privacy (CDP)~\cite{Bolot2013} relies on a
trusted server to collect and perturb data, local differential privacy
(LDP)~\cite{Wang2017l, Wang2018p, Ren2018, Zhang2018, Duchi2013, Cormode2018,
  Wang2019l, Murakami2019, Ye2019, Qin2016, Qin2017} eliminates this requirement
by applying perturbation directly on the user side.
This advantage has led to the deployment of LDP in large-scale systems,
including Google's RAPPOR~\cite{Erlingsson2014}, Apple’s iOS
analytics~\cite{Team2017}, and Microsoft’s telemetry services~\cite{Ding2017}.

Researchers have proposed many privacy-preserving methods for streaming data,
most of which focus on user-level DP for finite streams~\cite{Fan2013a,
  Fan2013, Fan2012} or event-level DP for infinite streams~\cite{Dwork2010,
  Bolot2013}.
However, real-world applications usually run for long periods, producing
infinite streams.
Providing event-level privacy on such streams cannot satisfy the requirements for
protecting arbitrary events.
In contrast, providing user-level privacy requires adding infinite
perturbations, which ultimately reduces the utility of the data.
To address these issues, Kellaris et al.~\cite{Kellaris2014} proposed $w$-event
DP to protect any sequence of events that occur in any $w$ consecutive
timestamps (i.e., a sliding window of size $w$), achieving a balance between
privacy and utility for infinite streams.

However, existing $w$-event DP studies~\cite{Ren2020,Wang2020,Wang2019,Wang2016}
are primarily designed under the CDP setting.
As an emerging solution for infinite data streams, $w$-event LDP~\cite{Ren2022}
shows strong potential but remains underexplored.
There are two key limitations that remain understudied and hinder its practical
adoption.
First, existing mechanisms fail to achieve efficient privacy budget utilization.
While dynamic budget allocation is theoretically possible, empirical studies
reveal that these strategies often perform worse than simple baselines, such as
uniform allocation or random sampling~\cite{Schaler2024}.
Second, $w$-event LDP is severely limited in the types of analytical tasks it
supports.
While event-level and user-level DP methods have matured to support diverse
analytical tasks including counting, range queries, and event
monitoring~\cite{Chen2017, Cao2017, Chan2012}, existing $w$-event methods remain
limited to fixed-granularity statistical releases, particularly frequency
histograms~\cite{Ren2022,Li2025}.
This single-focus output makes it impossible to perform \emph{multi-granularity}
analysis on the same continuous stream.
This rigidity is a critical barrier in real-world applications such as
intelligent vehicle systems~\cite{sharif2017internet},
where operators require
insights at different \emph{spatial} resolutions.
For example, they require fine-grained insights, such as the specific vehicle
count at a single intersection to manage traffic lights.
Simultaneously, they must analyze coarse-grained data, such as the total traffic
volume along an entire arterial road to identify bottlenecks.

To overcome these limitations and design a $w$-event LDP framework that supports
concurrent multi-task processing, we identify four fundamental challenges to be
addressed:
\begin{itemize}[nosep]
\item \textbf{Spatial Imbalance in Data Distribution.}
  Real-world streaming data often exhibits significant spatial imbalance within
  individual timestamps.
  For instance, in intelligent transportation systems, traffic on arterial roads
  may far exceed that on side roads.
  Existing methods that uniformly partition the domain and apply the same noise scale to all regions can bury signals from sparse regions beneath the noise magnitude, obscuring underlying patterns and impairing downstream analytics.

\item \textbf{Temporal Variation in Data Distribution.}
  The variation in stream distributions over time poses a significant challenge for
  allocating privacy budgets that can adaptively adjust to the degree of
  variation across the sliding window.
  Existing methods, often focusing solely on local information, fail to capture these dynamics, resulting in poor utility.

\item \textbf{Privacy Budget Sharing across Multiple Tasks.}
  Streaming systems are often required to support multiple query tasks
  concurrently.
  As the number of concurrent tasks increases, the limited privacy budget must
  be divided among more tasks, causing a rapid decline in the budget allocated
  to each task and significantly degrading overall accuracy.

\item \textbf{Low-Latency Constraints in Real-Time Processing.}
  Real-time streaming applications impose strict latency requirements.
  Even if a more accurate mechanism exists, it may be impractical if it cannot
  respond within tight deadlines.
  This necessitates a trade-off among privacy protection, estimation accuracy,
  and processing latency.
\end{itemize}

To address the above challenges, we propose MTSP-LDP, a new $w$-event LDP
framework for infinite streams.
Our main contributions are summarized as follows.
\begin{itemize}[nosep]
\item {To the best of our knowledge, this paper is the first to address the
  challenge of supporting multi-task, multi-granularity analytical tasks under
  $w$-event LDP.
  Specifically, we introduce the \emph{Budget-Free Multi-Task Processing}
  mechanism, which enables concurrent support for different types of queries
  over infinite data streams.
  Notably, this mechanism operates without requiring modifications to other
  modules or the perturbation data reported by users.
  Crucially, it incurs no additional privacy cost.}

\item {Our framework's high utility is achieved through two key technical
  innovations.
  First, the \emph{Optimal Privacy Budget Allocation} mechanism adaptively
  allocates privacy budgets based on data variation across each sliding window
  of size $w$, ensuring that timestamps with greater variation receive more
  privacy budget.
  Second, the \emph{Private Adaptive Tree Publication} mechanism improves accuracy by constructing a tree tailored to the underlying data distribution, departing from traditional fixed-partition histograms. It then applies cross-timestamp grouping and smoothing to further enhance estimation accuracy. Crucially, our mechanism resolves the prohibitive level-by-level interaction latency associated with traditional adaptive trees originally designed for static settings, thus enabling data-adaptive structures for streaming data.}

\item We conduct extensive experiments on four real-world datasets.
  Experimental results demonstrate that MTSP-LDP consistently outperforms
  state-of-the-art approaches designed for these specific tasks, achieving high
  utility under strict privacy guarantees.
\end{itemize}


\section{Preliminaries and Problem Definition}
\label{sec:Preliminaries}

\subsection{Data Stream Model}
\label{ss:data_stream}

We consider a distributed system consisting of a server and a set of $n$ users, $U=\{u_1,\ldots,u_n\}$.
Each user $u_i\in U$ reports a value $v_{it}\in\Omega$ at each discrete timestamp
$t$, e.g., the user's location in a city.
We assume users' values are from a finite domain of size $d$, denoted by
$\Omega=\{\omega_1,\ldots, \omega_d\}$.
All users' reported values at time $t$ form a dataset $D_t$, which consists of
$n$ rows and $d$ columns with $D_t[i][j]=1$ if user $u_i$ reported value
$\omega_j$ at time $t$ and $D_t[i][j]=0$ otherwise.
We allow a user $u_i$ not to report her value at time $t$, i.e., $u_i$ is
inactive at $t$, and in this case, the $i$-th row of $D_t$ is zero.
Let $n_t$ denote the number of active users at time $t$.
Given $D_t$, the server can compute statistics of interest to publish.
For example, the server may continuously publish a vector
$\bc_t=[c_{1t},\ldots,c_{dt}]^\top$ at each time $t$, where $c_{jt}$ denotes the
number of users reporting the location $\omega_j\in\Omega$ at time $t$, and
publishing $\bc_t$ is useful for people to know the traffic congestion situation
in a city at time $t$.

However, the user's value may be private (e.g., her location), and publishing
statistics computed from this raw data can compromise privacy.
For example, suppose that there is only one active user in the system. In this case, the
published statistics directly reveal the user's private value.
In addition, the server is not assumed to be trusted (e.g., it is compromised
or contains software bugs), and it may leak users' private values.
Therefore, privacy-preserving techniques are required to protect users'
sensitive data from the server.

\subsection{\texorpdfstring{$w$-event LDP on Data Streams}{w-event LDP on Data Streams}}
To precisely define privacy on data streams, we review a useful concept called
$w$-event LDP~\cite{Ren2022}.
Let $V_{it}=(v_{i1},\ldots,v_{it})$ denote user $u_i$'s stream by time $t$.

\begin{definition}[$w$-neighboring]
  Let $V_{it}$ and $V_{it}'$ denote two streams of a single user $u_i$ by time
  $t$.
  Let $w$ denote a positive integer.
  $V_{it}$ and $V_{it}'$ are $w$-neighboring, if for each $v_{ir}$, $v_{is}$,
  $v_{ir}'$, $v_{is}'$ with $1\le r \le s \le t, v_{ir}\neq v_{ir}'$ and $v_{is}
  \neq v_{is}'$, it holds that $s - r + 1 \le w$.
\end{definition}

In other words, we consider two versions of a user $u_i$'s stream $V_{it}$ and
$V_{it}'$, and we say $V_{it}$ and $V_{it}'$ are $w$-neighboring if they have
different values at timestamps fitting in a window of size up to $w$.

\begin{definition}[$w$-event LDP]
  Let $\CM$ be a mechanism that takes as input a stream $V_{it}$ of user $u_i$.
  Let $\CO$ denote the set of all possible outputs of $\CM$.
  We say that $\CM$ satisfies $w$-event $\epsilon$-local differential privacy
  (or, simply, $w$-event LDP) if for any $w$-neighboring streams $V_{it},
  V_{it}'$, $\forall O\subseteq\CO$ and all $t$, it holds that
  \[
    \Pr[\CM(V_{it})\in O] \leq e^\epsilon \Pr[\CM(V_{it}') \in O].
  \]
\end{definition}

The definition captures that a $w$-event LDP mechanism guarantees $\epsilon$-LDP
for each user within any window of size $w$.
The parameter $\epsilon$ controls the privacy risk by injecting a proper amount
of noise to user data~\cite{Dwork2013}.
A smaller $\epsilon$ indicates stronger privacy protection but also lowers the
accuracy and utility of the published result as more noise is added to user
data.

\subsection{Frequency Oracle (FO) Under LDP}

Frequency oracle (FO) protocols are common building blocks of many
privacy-preserving techniques.
An FO can estimate the frequency distribution of a private attribute while
preserving privacy.
Here we briefly introduce \emph{optimized unary encoding} (OUE)~\cite{Wang2017l},
a widely used FO protocol that achieves $\epsilon$-LDP with high estimation
accuracy.
OUE consists of the following steps.
\begin{itemize}
\item{\textbf{Encoding.}} 
Each user $u_i$ encodes her private value $v_i\in\Omega$ into a one-hot binary
vector $\bx_i$ of length $d$, where the $j$-th bit $\bx_i[j] = 1$ if and only if
$v_i= \omega_j$, and $0$ otherwise.
\item{\textbf{Perturbation.}}
Instead of directly submitting $\bx_i$ to a server, OUE perturbs each bit of
$\bx_i$ independently.
Specifically, if $\bx_i[j]=1$, it remains $1$ with probability $p = 1/2$ and
flips to $0$ with probability $1-p$.
If $\bx_i[j]=0$, it flips to $1$ with probability $q = \frac{1}{e^\epsilon + 1}$
and remains $0$ with probability $1-q$.
The perturbed vector $\bx_i'$ is then sent to the server, thereby preserving
users' privacy.
\item{\textbf{Aggregation.}}
The server aggregates the perturbed vectors from all users and estimates the
frequency for each value $\omega_j$.
Let $\by[j]$ be the total number of perturbed reports with bit $j$ set to $1$.
An unbiased estimate of the true frequency $f_j$ is given by
\[
  \hat{f}_j = \frac{\by[j] - nq}{n(p - q)}, \quad j=1,\ldots,d.
\]
\item{\textbf{Estimation Error.}}
OUE satisfies $\epsilon$-LDP, and the variance of the OUE estimator $\hat{f}_j$
is
\[
  \var(\hat{f}_j) = \frac{4e^\epsilon}{n(e^\epsilon - 1)^2}, \quad j=1,\ldots,d.
\]
This variance depends on the number of users $n$ and the privacy budget
$\epsilon$.
To simplify notation, in the following discussion, we will denote the variance
of OUE by $\var(\epsilon)$ while the number of users should be clear from
context.
\end{itemize}

\subsection{Private Binary Tree for Static Data}
\label{ss:tree}

Private binary trees are widely used in many scenarios for different
purposes under LDP~\cite{Chan2011, Honaker2015, wang2019HIO}.
For example, Chan et al.~\cite{Chan2011} first utilized a binary interval tree
to represent a binary stream where each incoming binary value is assigned to a
leaf node, enabling the efficient computation of time-range queries.
Wang et al.~\cite{wang2019HIO} leveraged a hierarchy interval tree to support
multi-dimensional queries in the static setting.

In this work, we 
use this data structure as summary statistics of user data $D_t$. 
As illustrated in \cref{fig:tree}, a private binary tree is a perfect binary
tree defined on the value domain $\Omega$.
Each node of the tree is associated with an interval (or a set), that is the
union of its child nodes' intervals.
Nodes in the same level have disjoint intervals, and they form a partition of
$\Omega$.
Each node is assigned with a \emph{property}, e.g., the fraction of users
holding a value in the node's interval (i.e., its frequency).
Formally, let $\DT_t$ denote the tree built from user data $D_t$ at time $t$,
and let $\DT_t[a]$ denote the property of node $a$.
For example, $\DT_t[a]$ in \cref{fig:tree} denotes the fraction of users holding
a value in set $\{0,1\}$.

\begin{figure}[htp]
  \centering
  \input{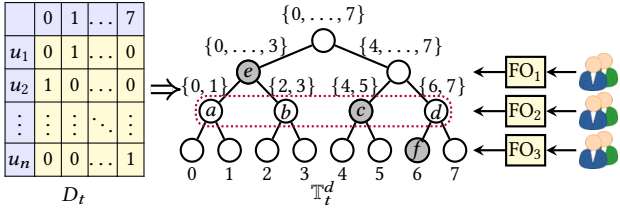}
  \caption{Representing user data as a private binary tree, using FO to estimate
    each node's property in LDP, and performing a range query on interval
    $[0,6]$.}
  \label{fig:tree}
  \Description{The figure illustrates the data structure and query processing flow. 
On the left, a data table labeled $D_t$ shows $n$ users (rows $u_1$ to $u_n$) with one-hot encoded values across 8 columns ($0$ to $7$). 
An arrow points to the center, where a binary tree structure represents the domain decomposition. 
The tree shows how a range query for the interval $[0, 6]$ is decomposed into a minimal set of nodes: a grey node labeled 'e' covering $\{0,...,3\}$, a grey node labeled 'c' covering $\{4,5\}$, and a grey leaf node labeled 'f' covering $\{6\}$. 
Other nodes like 'a', 'b', and 'd' are unshaded. 
On the right, three groups of users provide input to three Frequency Oracles ($FO_1$, $FO_2$, $FO_3$), which contribute data to different levels of the tree structure.}
\end{figure}

While in the LDP setting, the server can not directly access user data $D_t$.
To build a private binary tree of $D_t$, the server can leverage an FO.
Specifically, we randomly partition users into $L-1$ disjoint groups of equal
size, where $L$ is the number of levels of the tree.
For each level except level $0$ (which only has the root node representing
domain $\Omega$), the server invokes an FO whose domain size equals the number
of nodes at that level, using privacy budget $\epsilon$.
Then, for each node in the level, the frequency estimate of FO is actually the
fraction of users holding a private value in the node's interval.
Because each user participates at exactly one level, by parallel composition the
overall procedure still satisfies $\epsilon$-LDP.
For example, in \cref{fig:tree}, the server can invoke $\FO_2$ in level $2$ on a
domain with $4$ intervals, and the frequency estimates of $\FO_2$ are estimates
of $\DT_t[x]$ for $x\in\{a,b,c,d\}$.

In the following discussion, we let $\hDT_t$ denote the estimated private binary
tree of $D_t$.
Note that $\hDT_t$ and $\DT_t$ have the same structure except that node
properties in $\hDT_t$ are estimated using an FO.
Because every FO is run for each level with the same number of users, the
variance of $\hDT_t[a]$ is the same, i.e., $\var(\epsilon)$ when privacy budget
is $\epsilon$.

The estimated private binary tree $\hDT_t$ can be used to answer range queries
with high estimation accuracy.
To answer a range query for a specified interval, we select a minimum number of
nodes in the tree that cover the query interval.
Node properties of this minimum cover can be used to answer the query.
For example, in \cref{fig:tree}, if we want to estimate how many users holding a
private value in $\{0,\ldots,6\}$, we first find a minimum cover $\{e,c,f\}$ and
then answer the query as $(\hDT_t[e]+\hDT_t[c]+\hDT_t[f])n_t$.

While this structure effectively supports queries on static datasets, its direct extension to infinite streams entails significant limitations. A natural approach is to treat each time step independently and build a separate private tree for each $D_t$. However, this ignores the temporal correlations inherent in streaming data and requires composing the privacy loss over a long (potentially unbounded) sequence of releases, which leads to poor utility under our streaming privacy notion. Moreover, the tree partition is fixed and cannot adapt to time-varying data distributions; when the distribution changes over time, many nodes become either too sparse or too dense, resulting in large estimation errors. These limitations necessitate the development of our proposed framework, which builds on the above tree representation but incorporates dynamic budget allocation and data-adaptive mechanisms to handle evolving streams.


\subsection{Queries on Data Streams}
\label{ss:queries}

Given a data stream formed by user data at each timestamp, i.e., $\CD =
(D_1,D_2,\ldots)$, other than the histogram/frequency query extensively studied
in previous work~\cite{Kellaris2014, Ren2022}(i.e., the query $\bc_t/n_t$ defined in \cref{ss:data_stream}), we
define several novel streaming query tasks.

\begin{definition}[Counting Query]
  Let $c_t(v)$ denote the number of users reporting the private value $v\in\Omega$ at
  time $t$, i.e., $c_t(v)\triangleq |\{u_i\in U\colon v_{it}=v\}|$.
  Given a value $v\in\Omega$ and a positive integer $\Delta$, a counting query
  aims to report the total number of times users report the private value $v$ in
  the most recent $\Delta$ timestamps.
  Formally,
  \[
    Q_t^c(v,\Delta)\triangleq \sum_{t'=t-\Delta +1}^t c_{t'}(v).
  \]
\end{definition}

The counting query is a generalization of the histogram query and has widespread
applications.
Consider the following example.

\begin{example}\label{exam:vehicle}
  Consider a vehicle dispatch management system in a city.
  At each time $t$, the system needs to publish the total number of times
  vehicles passed by a location $v$ over the past ten minutes.
  Assuming a time granularity of one minute, then the query corresponds to
  $Q_t^c(v, 10)$.
\end{example}

As a special case, when $\Delta=1$, $Q_t^c(v,1)=c_t(v)$ which is the number of
users reporting $v$ at time $t$, and $Q_t^c(v,1)/n_t$ becomes the frequency of
value $v$ at time $t$.

\begin{definition}[Range Query]
  Given a value range $[v_1, v_2]$ (where $v_1\leq v_2$) and a positive integer
  $\Delta$, a range query aims to report at each time $t$ the total number of
  times users report private values in $[v_1, v_2]$ in the most recent
  $\Delta$ timestamps.
  Formally,
  \[
    Q_t^r([v_1, v_2],\Delta)\triangleq
    \sum_{t'=t-\Delta +1}^t\sum_{v\in[v_1,v_2]} c_{t'}(v)
  \]
\end{definition}

The range query further generalizes the counting query by allowing the user value to
be in a specified range.
Consider the following example.

\begin{example}\label{exam:disease}
  A disease monitoring center needs to publish daily the total number of
  infected people with age above $60$ over the past week during the COVID
  pandemic.
  Assuming a time granularity of one day, then the query corresponds to $Q_t^r([60,
  \infty), 7)$.
\end{example}

In many real-world scenarios, the server also wants to identify specific
abnormal events in the stream through a class of queries known as event
monitoring.

\begin{definition}[Event Monitoring]\label{def:event_monitoring}
  Let $\alpha$ and $\beta$ be two functions, where $\alpha$ could be a counting
  query or a range query on the stream, and $\beta$ is a Boolean function defined
  on the output of $\alpha$.
  An event monitoring query, denoted by $Q_t^e(\alpha,\beta)$, aims to report at
  each time $t$ a binary result where $1$ denotes the monitored event occurs,
  and $0$ otherwise.
\end{definition}

We illustrate the usefulness of this concept by the following example.
Let $\Delta>0$ denote a specified monitoring period, and we are interested in
monitoring whether counts of a value $v$ significantly increase in two
consecutive monitoring periods.
Then the two functions $\alpha$ and $\beta$ can be formally defined as
\begin{align*}
  \alpha(v,\Delta) &= Q_t^c(v, \Delta) - Q_{t-w+1}^c(v, \Delta), \\
  \beta(x,\vartheta) &= \indr{x>\vartheta},
\end{align*}
where $\indr{B}$ denotes an indicator function, and it returns $1$ if the
condition $B$ is true, otherwise $0$.
The event monitoring query is a compound function on the stream, i.e.,
$Q_t^e(\alpha,\beta)= \beta\circ\alpha$, which reports $1$ if there is a sudden
increase of counts of value $v$ in two consecutive monitoring periods.
It is also straightforward to define $\alpha$ using a range query to monitor
multiple user values.
Therefore, we can use the defined event monitoring queries to monitor the sudden
change in counts in \cref{exam:vehicle,exam:disease}.

\paragraph{Remarks}
The streaming queries proposed above are generalizations of existing
works~\cite{Kellaris2014,Chen2017} with the emphasis on data streams and aiming
to achieve $w$-event LDP.
Furthermore, for parameter $\Delta$ in these queries, we require $\Delta \leq w$
in order to satisfy $w$-event LDP.

\subsection{\texorpdfstring{Recent Methods to Achieve $w$-event LDP}{Recent Methods to Achieve w-event LDP}}
\label{ss:competitors}

State-of-the-art $w$-event LDP methods~\cite{Ren2022} focus on releasing
statistical histograms at each timestamp.
These methods can be categorized into the following four types based on their
privacy budget allocation strategies.
\begin{itemize}
\item {LDP Budget Uniform (LBU)} applies a fixed $\epsilon/w$ budget to each
timestamp in the window.
However, as window size $w$ increases, the allocated budget per timestamp
becomes vanishingly small, introducing excessive noise and severely impairing
utility.

\item{LDP Sampling (LSP)} allocates the entire privacy budget $\epsilon$ to a
single timestamp within the sliding window, providing high accuracy at that
point while reusing its result to approximate the data for all other timestamps.
This method performs well when the stream is stable but fails to adapt when the
stream fluctuates.
The approximation errors can become very large if subsequent stream data differ
significantly from previous releases.

\item{LDP Budget Distribution (LBD)} and \textbf{LDP Budget Absorption
  (LBA)} adopt dynamic budget allocation techniques, drawing inspiration from
$w$-event CDP~\cite{Kellaris2014}.
Both methods consist of two sub-mechanisms: private dissimilarity estimation and
private strategy determination.
The private dissimilarity estimation sub-mechanism computes the private
dissimilarity between the current true statistics and the previous release based
on perturbed user data collected via an FO with a fixed budget
$\epsilon_{t,1}=\epsilon/(2w)$.
The private strategy determination sub-mechanism decides whether to publish new
statistics or reuse previous values by comparing the estimated dissimilarity
with a potential publication error determined by the publication budget
$\epsilon_{t,2}$.
The allocation of $\epsilon_{t,2}$ differs between LBD and LBA.
In LBD, the budget is distributed across timestamps requiring data publication
in an exponentially decaying manner, and budget spent at timestamps outside the
current window is reclaimed for reuse.
In contrast, LBA first allocates the budget uniformly and then absorbs it at the
timestamps that use approximation.


\item{Population Division Extensions.}
To improve the utility of LDP mechanisms, population division has emerged as a
promising strategy~\cite{Kulkarni2019,Wang2017l,Ren2022}.
Instead of splitting the privacy budget across timestamps, this paradigm
partitions the user population, assigning each user to report at a specific
timestamp using the entire privacy budget.
This reduces estimation variance via the amplification-by-subsampling effect and
enables more accurate data release within each reporting round.
Based on this idea, LBU, LBD, and LBA can be extended to LPU (Population-based
Uniform), LPD (Population-based Distribution), and LPA (Population-based
Absorption), respectively.
Notably, LSP intrinsically follows this paradigm.
Population division is a general augmentation strategy that can be applied to
any FO-based mechanism, including our proposed MTSP-LDP.
While population division improves utility within each reporting round, it does
not capture temporal correlations across timestamps, which are critical in
streaming scenarios.
\end{itemize}
\paragraph{Remarks}
Adaptive methods such as LBD and LBA, while designed to flexibly allocate
privacy budgets based on data variations, have been shown to exhibit even lower
utility in practice\cite{Schaler2024}, failing to leverage the potential
advantages of dynamic privacy budget allocation.
Specifically, the private dissimilarity estimation sub-mechanism consumes half
of the total privacy budget to collect perturbed data.
This data is then discarded after estimating dissimilarity, halving the budget
available for data publication.
Additionally, these methods rely solely on single-timestamp information,
neglecting the rich temporal correlations within the sliding window.
Consequently, this strategy is insufficient, as the $w$-event LDP constraint
requires the total budget $\epsilon$ to be allocated reasonably across all $w$
timestamps, not just based on single-timestamp changes and the instantaneous
budget cap.

\section{The MTSP-LDP Framework}
\label{sec:method}

In this section, we propose a novel framework for \textbf{M}ulti-\textbf{T}ask
\textbf{S}treaming data \textbf{P}ublication with $w$-event \textbf{LDP}
guarantee (MTSP-LDP).
MTSP-LDP is designed to efficiently answer streaming queries on infinite data
streams and achieve $w$-event LDP.

\subsection{Overview}

At a high level, MTSP-LDP follows the \emph{dissimilarity guided publication}
framework, first proposed for the $w$-event CDP setting~\cite{Kellaris2014} and
later extended to the $w$-event LDP setting~\cite{Ren2022}.
The main idea of this framework is that, the server checks at every timestamp
whether it is more beneficial to approximate the current stream statistics with
the last released statistics, than to publish newly computed stream statistics
with necessary noise.
MTSP-LDP further enhances this framework by proposing an \emph{optimal budget
  allocation strategy} to improve estimation accuracy and a \emph{private binary
  tree structure} to enable multi-task streaming queries.
In more detail, MTSP-LDP consists of four mechanisms, i.e., private
dissimilarity estimation, optimal privacy budget allocation, private adaptive
tree publication, and budget-free multi-task streaming query, as illustrated in
\cref{fig:framework}.

\begin{figure}[htp]
  \centering
  \input{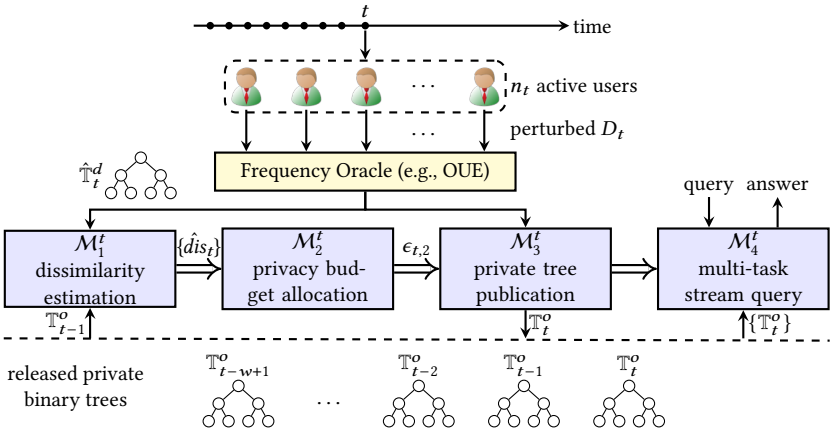}
  \caption{The framework of MTSP-LDP}
  \label{fig:framework}
  \Description{The figure illustrates the workflow of the MTSP-LDP framework at timestamp $t$. At the top, $n_t$ active users send perturbed data to a Frequency Oracle. The process then flows through four connected modules labeled $M_1^t$ through $M_4^t$.}
\end{figure}
\begin{itemize}
\item {\textbf{Private Dissimilarity Estimation $\CM_1^t$}}.
At each time $t$, the server computes the dissimilarity $\dis_t$ between the
statistics of current user data $D_t$ and the last released stream statistics,
and $\dis_t$ will be used to determine whether to approximate with the previous
release or publish with noise.
The main challenge we need to address in $\CM_1^t$ is how to privately compute
$\dis_t$ as user data $D_t$ is not available to the server in LDP, and instead,
we design an unbiased estimator $\hdis_t$ to estimate $\dis_t$ while preserving
user privacy.
\item {\textbf{Optimal Privacy Budget Allocation $\CM_2^t$}}.
Given $\hdis_t$, existing methods~\cite{Kellaris2014,Ren2022} directly compare
$\hdis_t$ with the publication error $\err_t$. If $\hdis_t < \err_t$, then they
approximate with the previous release; otherwise, they publish with noise using a
predetermined privacy budget.
However, we notice that this commonly used strategy is often not optimal,
particularly in the $w$-event setting, as it bases its decision solely on the
current timestamp while neglecting the temporal context within the sliding window.
We therefore propose an optimal privacy budget allocation strategy by cleverly leveraging
previously computed dissimilarities, i.e., $\{\hdis_{t'}\}_{t'\leq t}$.
This strategy will determine whether to approximate or to publish, and how to
optimally allocate the privacy budget at time $t$.

\item {\textbf{Private Adaptive Tree Publication $\CM_3^t$}}.
In MTSP-LDP, the server privately collects user data $D_t$ and represents $D_t$
as a private binary tree to both protect user privacy and support
multi-task streaming query.
This private binary tree participates in dissimilarity estimation in
$\CM_1^t$ as well as streaming query in $\CM_4^t$.
We construct the private binary tree in a data-adaptive way in order to
accurately capture the statistics of user data $D_t$.
The output tree $\OT_t$ is then released and will be used by $\CM_4^t$ to run
streaming queries.

\item {\textbf{Budget-Free Multi-task Streaming Query $\CM_4^t$}}.
Unlike existing research, which typically focuses on ad-hoc queries, e.g., counting, histogram query, or frequency query,
MTSP-LDP is designed to support multi-task streaming queries, including all
queries defined in \cref{ss:queries}, by utilizing the released private binary
trees $\{\OT_{t'}\}_{t'\leq t}$ without incurring  any additional privacy budget.
We theoretically show that MTSP-LDP satisfies $w$-event LDP.
\end{itemize}

These above sub-mechanisms work in tandem to achieve privacy-preserving data analysis in streaming environments, as illustrated in Algorithm~\ref{alg:mtsp-ldp}. Next, we
will introduce them in detail.
\begin{algorithm}[H]
  \caption{MTSP-LDP}
  \label{alg:mtsp-ldp}
  \KwIn{Data stream $\CD=(D_1,D_2,\ldots)$, privacy budget $\epsilon$,
    sliding window size $w$}
  \KwOut{Respond to queries $Q_t^c, Q_t^r$ or $Q_t^e$ at each time $t$}

  \ForEach{$t=1,2,\ldots$}{
    \eIf{$t \leq w$}{
      $\OT_t\gets\FO(D_t, \epsilon/w)$\tcp*{run FO on $D_t$ with budget
        $\epsilon/w$}
    }{
      \tcp{1. private dissimilarity estimation}
      $\hDT_t\gets\FO(D_t, \epsilon/(2w))$\;
      $\hdis_t\gets\text{dissimilarity}(\hDT_t, \OT_{t-1})$\;
      \tcp{2. optimal privacy budget allocation}
      $\epsilon_{t,2}\gets$ run $\CM_3^t(\{\hdis_t\})$\;
      \tcp{3. private tree publication}
      \eIf{$\epsilon_{t,2}>0$}{
        $\hAT_t\gets\texttt{ATC}(\hDT_t)$\tcp*{adaptive tree construction}
      }{
        $\hAT_t\gets\OT_{t-1}$\;
      }
      $\OT_t\gets\texttt{GroupSmooth}(\hAT_t)$\;
    }
  }
  \tcp{4. multi-task processing}
  Use $\{\OT_{t'}\}_{t'\leq t}$ to respond to queries $Q_t^c, Q_t^r$ or $Q_t^e$.
\end{algorithm}



\subsection{Private Dissimilarity Estimation}

We are now ready to describe each mechanism in MTSP-LDP.
Recall that MTSP-LDP follows the dissimilarity guided publication framework,
where the server checks at every timestamp whether it is more beneficial to
approximate the current statistics with the last released statistics, than to
publish new statistics with noise.
This requires the server to compute the dissimilarity between user data $D_t$ at
time $t$ and the last released statistics.
It will be clear later that the last released statistics is also represented as
a private binary tree, denoted by $\OT_{t-1}$ (see \cref{ss:tree_publication}).
To compare the dissimilarity between $D_t$ and $\OT_{t-1}$, we can convert $D_t$
to a private binary tree $\DT_t$ using the method described in \cref{ss:tree},
and the dissimilarity is defined as the mean of the squared differences between
corresponding node properties in these two trees, i.e.,
\[
  \dis_t = \mathrm{dissimilarity}(\DT_t, \OT_{t-1}) \triangleq
  \frac{1}{|\DT_t|}\sum_{a\in \DT_t}(\DT_t[a] - \OT_{t-1}[a])^2,
\]
where $|\DT_t|$ denotes the number of nodes in the tree.

However, in the LDP setting, the server cannot directly access user data $D_t$,
hence $\DT_t$ is unknown.
Instead, the server can estimate $\DT_t$ by $\hDT_t$ using several FOs, as
explained in \cref{ss:tree}.
Next, we present an unbiased estimator of $\dis_t$.

\begin{theorem}\label{thm:dis}
  If $\hDT_t$ is estimated with privacy budget $\epsilon_{t,1}$, then $\hdis_t$
  is an unbiased estimate of $\dis_t$ and satisfies $\epsilon_{t,1}$-LDP, where
  \[
    \hdis_t\triangleq\frac{1}{|\hDT_t|}
    \sum_{a\in\hDT_t} (\hDT_t[a] - \OT_{t-1}[a])^2 - \var(\epsilon_{t,1}).
  \]
\end{theorem}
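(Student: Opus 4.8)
The plan is to establish the two assertions — $\epsilon_{t,1}$-LDP and unbiasedness — separately, since they rest on different properties. For the privacy claim I would argue by post-processing invariance. As described in \cref{ss:tree}, the tree $\hDT_t$ is obtained by partitioning users across the $L-1$ non-root levels and running one FO per level with budget $\epsilon_{t,1}$, so by parallel composition the estimation of $\hDT_t$ is itself $\epsilon_{t,1}$-LDP. The key observation is that $\hdis_t$ is a deterministic function of $\hDT_t$ alone: the previous release $\OT_{t-1}$ was computed from earlier data and is already fixed and public at time $t$, while $\var(\epsilon_{t,1})$ is a constant. Hence $\hdis_t$ is a post-processing of the $\epsilon_{t,1}$-LDP output $\hDT_t$, and differential privacy is immune to post-processing, which yields $\epsilon_{t,1}$-LDP at once.

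For unbiasedness, I would first note that $\hDT_t$ and $\DT_t$ share the same structure, so $|\hDT_t|=|\DT_t|$ and both sums range over the same node set. The only randomness in the statement is the FO perturbation at time $t$; since $\OT_{t-1}$ is built from $D_{t-1}$ with independent randomness, conditioning on it lets me treat each $\OT_{t-1}[a]$ as a constant. The central per-node step is a bias–variance decomposition: writing $\mu_a=\E[\hDT_t[a]]$,
\[
  \E\bigl[(\hDT_t[a]-\OT_{t-1}[a])^2\bigr]
  = \var(\hDT_t[a]) + \bigl(\mu_a-\OT_{t-1}[a]\bigr)^2 .
\]
Two facts recalled from \cref{ss:tree} then close the argument: the FO estimate is unbiased, so $\mu_a=\DT_t[a]$; and because every level runs its FO on the same number of users, each node property carries the same variance, $\var(\hDT_t[a])=\var(\epsilon_{t,1})$ for all $a$.

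Averaging the per-node identity over the $|\DT_t|$ nodes gives
\[
  \frac{1}{|\hDT_t|}\sum_{a\in\hDT_t}\E\bigl[(\hDT_t[a]-\OT_{t-1}[a])^2\bigr]
  = \var(\epsilon_{t,1}) + \frac{1}{|\DT_t|}\sum_{a\in\DT_t}(\DT_t[a]-\OT_{t-1}[a])^2
  = \var(\epsilon_{t,1}) + \dis_t ,
\]
so subtracting the correction term $\var(\epsilon_{t,1})$ in the definition of $\hdis_t$ leaves $\E[\hdis_t]=\dis_t$, as claimed.

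The computation itself is routine; the substance of the proof lies in justifying the two structural assumptions underpinning the correction $\var(\epsilon_{t,1})$ — that each node estimate is \emph{unbiased} and that all node estimates carry the \emph{identical} variance $\var(\epsilon_{t,1})$ — together with the observation that $\OT_{t-1}$ is independent of the current perturbation and therefore contributes nothing beyond the squared-bias term. I expect this independence-and-uniformity bookkeeping, rather than any algebraic difficulty, to be the only place requiring care.
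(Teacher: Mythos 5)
Your proposal is correct and follows essentially the same route as the paper's proof: the same per-node bias--variance decomposition $\E[(\hDT_t[a]-\OT_{t-1}[a])^2]=(\DT_t[a]-\OT_{t-1}[a])^2+\var(\hDT_t[a])$, the same use of the FO's unbiasedness and uniform per-node variance $\var(\epsilon_{t,1})$, and the same post-processing argument for privacy. Your explicit remarks on conditioning on $\OT_{t-1}$ and on parallel composition only make precise what the paper leaves implicit.
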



\begin{proof}
We begin by computing the expectation of $\hdis_t$:
\[
\E(\hdis_t)
  = \E\big[ \frac{1}{|\hDT_t|}
    \sum_{a\in\hDT_t}(\hDT_t[a]-\OT_{t-1}[a])^2 - \var(\epsilon_{t,1}) \big]
\]
Since $\var(\epsilon_{t,1})$ is a constant, it can be factored out of the expectation:
\begin{align*}
  \E(\hdis_t)
  &= \frac{1}{|\hDT_t|}\sum_{a\in \hDT_t} \E\big[
    (\hDT_t[a] - \OT_{t-1}[a])^2\big]  - \E\big[ \var(\epsilon_{t,1}) \big] \\
  &= \frac{1}{|\hDT_t|}\sum_{a\in \hDT_t} \big[
    (\DT_t[a] - \OT_{t-1}[a])^2 + \var(\hDT_t[a])\big] - \var(\epsilon_{t,1}) \\
  &= \frac{1}{|\hDT_t|}\sum_{a\in \hDT_t} \big[
    (\DT_t[a] - \OT_{t-1}[a])^2 + \var(\hDT_t[a]) - \var(\epsilon_{t,1}) \big]
\end{align*}

Since $\hDT_t[a]$ is generated by the FO mechanism with
budget $\epsilon_{t,1}$, we have $\var(\hDT_t[a])=\var(\epsilon_{t,1})$. Thus, the expectation becomes:
\[
  \E(\hdis_t)
  = \frac{1}{|\hDT_t|}\sum_{a\in \hDT_t} \big[
  (\DT_t[a] - \OT_{t-1}[a])^2 \big]
  = dis_t
\]

Finally, because differential privacy is immune to post-processing~\cite{Dwork2013},
estimator $\hdis_t$ is still $\epsilon_{t,1}$-LDP as long as $\hDT_t$ is
$\epsilon_{t,1}$-LDP.
\end{proof}

\paragraph{Remarks}
The dissimilarity is estimated at every timestamp using a fixed privacy budget
$\epsilon_{t,1}=\epsilon/(2w)$, i.e., half of the total privacy budget
uniformly spent on each timestamp in the window.
Since differential privacy is immune to post-processing~\cite{Dwork2013},
estimator $\hdis_t$ is still $\epsilon_{t,1}$-LDP as long as $\hDT_t$ is
$\epsilon_{t,1}$-LDP.

\subsection{Optimal Privacy Budget Allocation}
\label{ss:optimal_budget}

The server now needs to make a decision at time $t$ whether it should
approximate the current statistics using the last released statistics (referred
to as \emph{approximation}), or publish the current statistics with the necessary
noise (referred to as \emph{publication}).
Both options will introduce errors in the released statistics, i.e., the
approximation incurs error $\hdis_t$, and the publication incurs error
$\var(\epsilon_{t,2})$ which is the estimation error of running FO with privacy
budget $\epsilon_{t,2}$.

Existing methods~\cite{Kellaris2014,Ren2022} directly compare $\hdis_t$ with
$\var(\epsilon_{t,2})$, and if
$\hdis_t<\var(\epsilon_{t,2})$, then the server chooses approximation, otherwise it chooses
publication.
We notice that this commonly used strategy is myopic, focusing only on the current
timestamp and neglecting long-term overall accuracy of the released
statistics.
Recent empirical studies also show that these existing methods often perform
worse than even static baseline methods~\cite{Schaler2024}.
To address this weakness, we propose a novel strategy, i.e.,
\emph{\textbf{O}ptimal privacy \textbf{B}udget \textbf{A}llocation (OBA)}, that
aims to achieve high long-term accuracy of released statistics.

The idea of OBA is that, at every time $t$, we make a decision of either
approximation or publication by considering the benefit in a long-term period of
$w$ timestamps backwards, rather than only the current timestamp.
In more detail, as we already know the dissimilarities $\hdis_{t-w+1}, \ldots,
\hdis_t$, an optimal privacy budget allocation strategy in a time window of size
$w$ should choose publication on those timestamps with large dissimilarities
(say, top $k$ largest dissimilarities), as choosing approximation on these $k$
timestamps is only likely to introduce larger errors in the released statistics.
Hence, a good strategy should spend the remaining $\epsilon/2$ privacy budget on
these $k$ timestamps.
Once a timestamp is selected for publication, its approximation error $\hdis_t$ is replaced by a publication error. The goal then is to minimize the total publication error (the sum of variances) across these $k$ selected timestamps under a fixed total budget of $\epsilon/2$. Since the magnitude of the original $\hdis_t$ no longer affects that timestamp’s error and $\var(\cdot)$ is convex in the budget, the total publication error is minimized by dividing the budget equally among all the $k$ timestamps.
The remaining $w-k$ timestamps with small dissimilarities simply choose
approximation.
Finally, if the current time $t$ belongs to these $k$ timestamps, then the server
chooses publication with privacy budget $\epsilon_{t,2}=\epsilon/(2k)$;
otherwise, the server chooses approximation.
OBA is illustrated in \cref{fig:OBA}.

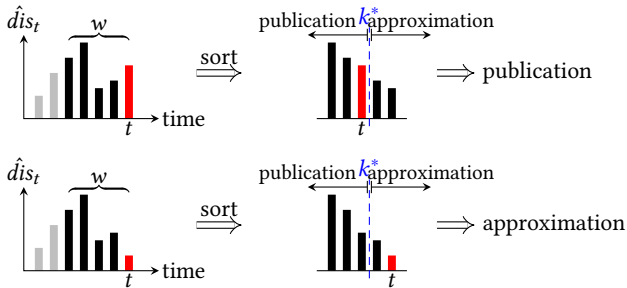
\begin{figure}[htp]
  \centering
  \begin{tikzpicture}[ycomb,font=\small,>=stealth, inner sep=1pt,
  bar/.style={line width=3pt},
  mybrace/.style={decorate,thick,decoration={calligraphic brace,raise=1pt}},
  txt/.style={font=\footnotesize},
  arr/.style={-Implies,double,double distance=2pt},
  ]

  \draw[bar] plot coordinates{(0,.8) (.2,1) (.4,.4) (.6,.5)};
  \draw[bar,color=gray!50] plot coordinates{(-.4,.3) (-.2,.6)};
  \draw[bar,color=red] plot coordinates{(.8,.7)};

  \draw[<->] (-.6,1.1) node[above] {$\hdis_t$} -- (-.6,0) -- (1.2,0) node[right] {time};
  \node[anchor=north] at (.8,0) {$t$};

  \draw[mybrace] (0,1) -- (.8,1) node[midway,above=3pt] {$w$};

  \draw[arr] (1.7,.6) -- ++(.6,0) node[midway,above=2pt] {sort};

  \begin{scope}[xshift=3.5cm]
    \draw[bar] plot coordinates{(0,1) (.2,.8) (.6,.5) (.8,.4)};
    \draw[bar,color=red] plot coordinates{(.4,.7)};
    \node[anchor=north] at (.4,0) {$t$};

    \draw (-.2,0) -- (1,0);

    \draw[densely dashed,blue] (.5,-.1) -- ++(0,1.3) node[above] {$k^*$};

    \draw[Bar->] (.48,1.1) -- ++(-.8,0) node[above,txt] {publication};
    \draw[Bar->] (.52,1.1) -- ++(.8,0) node[above,txt] {approximation};

    \draw[arr] (1.4,.6) -- ++(.5,0) node[right=2pt] {publication};
  \end{scope}

  \begin{scope}[yshift=-2cm]
    \draw[bar] plot coordinates{(0,.8) (.2,1) (.4,.4) (.6,.5)};
    \draw[bar,color=gray!50] plot coordinates{(-.4,.3) (-.2,.6)};
    \draw[bar,color=red] plot coordinates{(.8,.2)};

    \draw[<->] (-.6,1.1) node[above] {$\hdis_t$} -- (-.6,0) -- (1.2,0) node[right] {time};

    \draw[mybrace] (0,1) -- (.8,1) node[midway,above=3pt] {$w$};

    \node[anchor=north] at (.8,0) {$t$};

    \draw[arr] (1.7,.6) -- ++(.6,0) node[midway,above=2pt] {sort};
  \end{scope}

  \begin{scope}[xshift=3.5cm,yshift=-2cm]
    \draw[bar] plot coordinates{(0,1) (.2,.8) (.4,.5) (.6,.4)};
    \draw[bar,color=red] plot coordinates{(.8,.2)};
    \node[anchor=north] at (.8,0) {$t$};

    \draw (-.2,0) -- (1,0);

    \draw[densely dashed,blue] (.5,-.1) -- ++(0,1.3) node[above] {$k^*$};

    \draw[Bar->] (.48,1.1) -- ++(-.8,0) node[above,txt] {publication};
    \draw[Bar->] (.52,1.1) -- ++(.8,0) node[above,txt] {approximation};

    \draw[arr] (1.4,.6) -- ++(.5,0) node[right=2pt] {approximation};
  \end{scope}

\end{tikzpicture}
  \caption{Illustration of OBA}
  \label{fig:OBA}
  \Description{The figure demonstrates the Optimal Budget Allocation (OBA) strategy using two scenarios. 
The top row illustrates a "publication" case: the estimated dissimilarity at the current timestamp $t$ (highlighted in red) is high relative to the sliding window w. When the dissimilarities in the window are sorted in descending order, the value at $t$ falls within the top $k^*$ threshold, triggering a data publication. 
The bottom row illustrates an "approximation" case: the dissimilarity at $t$ is low. After sorting, the red bar falls outside the top $k^*$ range into the approximation region, meaning the current data will be approximated using historical values to save privacy budget.}
\end{figure}

The last problem is how to find an optimal $k$.
Let $\hdis_{(1)},\dots,\hdis_{(w)}$ denote the $w$ dissimilarities
$\hdis_{t-w+1}, \dots, \hdis_t$ sorted in descending order.
The \emph{cumulative error} of the released statistics in a time window of size
$w$ consists of two parts: (i) cumulative publication error at these $k$
timestamps due to running $k$ FOs each using a privacy budget
$\frac{\epsilon}{2k}$, thus incurring an error
$k\cdot\var(\frac{\epsilon}{2k})$, and (ii) the cumulative approximation error
on the remaining $w-k$ timestamps, incurring another error
$\sum_{i=k+1}^w\hdis_{(i)}$.
Furthermore, we should allow $k=0$, and in this case, the server always chooses
approximation in the current window, thus incurring an error
$\sum_{i=1}^w\hdis_{(i)}$.

Let $E_t(k)$ denote the cumulative error of the released statistics in the most
recent window of size $w$ at time $t$, expressing $k$ as a parameter.
Then, we have
\[
  E_t(k) =
  \begin{cases}
    \sum_{i=1}^w \hdis_{(i)}, & k=0, \\
    k \cdot \var(\frac{\epsilon}{2k}) + \sum_{i=k+1}^w\hdis_{(i)}, & 0<k\le w.
  \end{cases}
\]
Our goal is to find an optimal $k^*$ by minimizing $E_t(k)$, i.e.,
\begin{equation}\label{eq:optimization}
  k^* \in \arg\min_{0\le k\le w} E_t(k).
\end{equation}
As $k$ is in a finite set $\{0,\ldots,w\}$, a simple method to solve
Problem~\eqref{eq:optimization} is to enumerate each $k$ and choose one
minimizing $E_t(k)$.
The pseudo-code of OBA is given in \cref{alg:oba}.

\begin{algorithm}[htp]
  \caption{\textbf{O}ptimal Privacy \textbf{B}udget \textbf{A}llocation
    (OBA)\label{alg:oba}}
  \KwIn{Dissimilarities $\hdis_{t-w+1}, \ldots, \hdis_t$, privacy budget
    $\epsilon$, remaining privacy budget for current window
    $\epsilon_t^\mathit{rm}$} \KwOut{Optimal privacy budget $\epsilon_{t,2}$ for
    timestamp $t$}

  $[\hdis_{(1)},\ldots,\hdis_{(w)}]\gets
  \text{Sort}([\hdis_{t-w+1},\ldots,\hdis_t])$\tcp*{descending}\label{ln:sort}
  $[t_{(1)},\ldots,t_{(w)}]\gets$ timestamps of $[\hdis_{(1)},\ldots,\hdis_{(w)}]$\;
  $E_t[0]\gets\sum_{i=1}^w\hdis_{(i)}$\;\label{ln:sum1}
  \lFor{$k\gets 1$ \KwTo $w$}{%
    $E_t[k]\gets k\cdot\var(\frac{\epsilon}{2k})
    + \sum_{i=k+1}^w\hdis_{(i)}$}\label{ln:sum2}
  $k^*\gets\arg\min_k E_t[k]$\;
  \uIf(\tcp*[f]{if publication}){$t\in\{t_{(1)},\ldots,t_{(k^*)}\}$}{\label{ln:t}
    $\epsilon_{t,2}\gets\min(\epsilon_t^{\mathit{rm}},
    \frac{\epsilon}{2k^*})$\tcp*{publication using budget
      $\epsilon_{t,2}$}\label{ln:epsilon}
  }
  \lElse(\tcp*[f]{approximation using previous time step}){%
    $\epsilon_{t,2}\gets 0$}\label{ln:zero}
  \Return $\epsilon_{t,2}$\;
\end{algorithm}

It is noteworthy to mention that the ultimate goal of OBA is to decide whether
we choose approximation or publication at time $t$, and if it is publication,
how much privacy budget to spend.
In Line~\ref{ln:t}, if current timestamp $t$ belongs to the selected $k^*$
timestamps, then we choose publication using privacy budget $\epsilon_{t,2}$;
otherwise we choose approximation with no privacy budget (Line~\ref{ln:zero}).
In addition, $\epsilon_{t,2}$ should be upper bounded by the remaining privacy
budget $\epsilon_t^\mathit{rm}$ available in current window
(Line~\ref{ln:epsilon}).

\paragraph{Improving Efficiency}
A straightforward implementation of OBA requires sorting dissimilarities and enumerating all possible $k$, resulting in a per-timestamp time complexity of $O(w\log w + w^2) =
O(w^2)$.
When $w$ is large, such brute-force enumeration becomes computationally
expensive in streaming settings.
We present several tricks to improve its computational efficiency.
\begin{itemize}
\item{Incremental Maintenance.} 
 Instead of sorting dissimilarities from scratch (Line~\ref{ln:sort}) at each timestamp, we maintain the sorted sequence $\{\hdis_{(1)}, \dots, \hdis_{(w)}\}$ incrementally.
 By using a priority queue, the insertion of the new $\hdis_t$ and the deletion of the expired $\hdis_{t-w}$ can both be performed in $O(\log w)$ time.
\item{Recursive Derivation.} The computation of the cumulative error $E_t(k)$ (Line~\ref{ln:sum2}) can be optimized by reusing intermediate results.
The publication error term $k\cdot\var(\frac{\epsilon}{2k})$ is time
  invariant which depends only on $k$, thus it can be computed offline and stored.
  The publication error term can be computed incrementally while sorting (and similarly for the sum in Line~\ref{ln:sum1}).
  So calculating $E_t$ actually has time complexity $O(w)$.
\item{Early Termination.}
    The function $E_t(k)$ typically exhibits a unimodal structure (initially decreasing and then increasing) or monotonicity.
    Specifically, $E_t(k)$ decreases when the benefit of removing a large approximation error $\hdis_{(k+1)}$ outweighs the cost of added noise, and begins to increase when the remaining dissimilarities are small.
    This structure enables an early-stop strategy: we iterate $k$ starting from $0$ and terminate immediately once $E_t(k+1) > E_t(k)$. Let $k_{\mathit{stop}}$ denote the number of values evaluated before termination.
    Thus, we only need to evaluate $k_{\mathit{stop}}$ candidates instead of all $w$ options.
\end{itemize}
By combining these techniques, the overall time complexity is reduced to $O(\log w + k_{\mathit{stop}})$, ensuring scalability even with large window sizes.

\subsection{Private Adaptive Tree Publication}
\label{ss:tree_publication}

In the previous mechanism, if the output privacy budget $\epsilon_{t,2}>0$, then
the server will choose publication with error using privacy budget
$\epsilon_{t,2}$ to release current stream statistics, which are represented as
a private binary tree, denoted by $\OT_t$.
We now describe how to build tree $\OT_t$ using privacy budget $\epsilon_{t,2}$.

A straightforward way to build tree $\OT_t$ is to use the method we introduced
in \cref{ss:tree}.
That is, we build $\OT_t$ by invoking an FO at each layer with privacy budget $\epsilon_{t,2}$.
The issue of this approach is that, user data $D_t$ in practice may be not
evenly distributed.
There are few ``hot'' values (e.g., few popular places many people visit), and
many ``cold'' values (e.g., many places people rarely visit) in $D_t$.
In this case, nodes in $\OT_t$ representing cold values will have very small
property values (i.e., frequencies), and after adding noise by FO, their
estimation accuracy is likely to diminish, resulting in little to no utility.

To address these issues, a data-adaptive hierarchical structure is commonly employed for static data. Such a structure is typically constructed level by level: at each level, a central server interacts with users to obtain interval frequency counts and decides whether to further partition those intervals. Although this approach incurs latency that grows with tree height and is impractical in streaming scenarios, it is acceptable in static scenarios, where the private tree needs to be built only once. However, efficiently addressing this challenge in streaming data, where low latency is required, remains an open problem.

A key novelty of MTSP-LDP lies in its effective use of the intermediate tree $\hDT_t$ from $\CM_1^t$, which has been largely overlooked in prior methods.
We propose a \emph{Data-\textbf{A}daptive Private Binary
  \textbf{T}ree \textbf{C}onstruction (ATC)} method that builds a private binary
tree $\hAT_t$ able to better capture the statistics of user data $D_t$ than the
straightforward method.
Furthermore, we propose a \emph{grouping and smoothing} strategy to refine trees
$\{\hAT_{t'}\}_{t'\leq t}$, which can further improve the estimation accuracy of
the final released tree $\OT_t$.

\subsubsection{Data-Adaptive Private Binary Tree Construction (ATC)}

The idea of ATC is to leverage the tree $\hDT_t$ as an auxiliary tree to build a
data-adaptive private binary tree $\hAT_t$ to better capture the statistics of
user data $D_t$ than building the tree from scratch.
Recall that $\hDT_t$ is also built from $D_t$ but using privacy budget
$\epsilon_{t,1} = \epsilon/(2w)$ at time $t$.
Hence, the tree $\hDT_t$ also contains information about $D_t$.
We propose to use the node properties stored in $\hDT_t$ to decide whether we
need to prune the tree, i.e., remove branches representing cold values.
This approach allows the entire adaptive tree structure to be constructed at once, without consuming additional privacy budget or requiring extra rounds of user interaction.
Then, FO is only applied to the pruned tree to estimate node properties.
Finally, node properties in removed branches are directly inferred from their
parents.

\begin{figure}[htp]
  \centering
  \begin{tikzpicture}[thick,font=\small,
  nd0/.style={circle,minimum size=3mm,inner sep=0pt},
  nd1/.style={nd0,draw},
  nd2/.style={nd0,draw,blue},
  nd/.style={nd1},
  arr/.style={-Implies,double,double distance=2pt},
  ]

  \coordinate (O1);
  \foreach \i in {0,...,3}{
    \node[nd,right=.45*\i of O1] (a2\i) {};
  }
  \path (a20)--(a21) node[midway,above=.35,nd] (a10) {$a$};
  \path (a22)--(a23) node[midway,above=.35,nd] (a11) {};
  \path (a10)--(a11) node[midway,above=.3,nd] (a0) {};

  \draw (a10) -- (a0) -- (a11);
  \draw (a20) -- (a10) -- (a21);
  \draw (a22) -- (a11) -- (a23);

  \node[below=.9 of a0] {$\hDT_t$};

  \coordinate[right=.4 of a11] (A);
  \draw[arr] (A) -- ++(1,0) node[midway,above] {pruning};

  \coordinate[right=3 of O1] (O2);
  \foreach \i/\j in {0/0,1/0,2/1,3/1}{
    \node[nd\j,right=.45*\i of O2] (b2\i) {};
  }
  \path (b20)--(b21) node[midway,above=.35,nd] (b10) {$a$};
  \path (b22)--(b23) node[midway,above=.35,nd] (b11) {};
  \path (b10)--(b11) node[midway,above=.3,nd] (b0) {};

  \draw (b10) -- (b0) -- (b11);
  \draw (b22) -- (b11) -- (b23);

  \node[below=.9 of b0] {$\AT_t$};

  \coordinate[right=.4 of b11] (A);
  \draw[arr] (A) -- ++(1,0) node[midway,above] {FO};

  \coordinate[right=3 of O2] (O3);
  \foreach \i/\j in {0/2,1/2,2/1,3/1}{
    \node[nd\j,right=.45*\i of O3] (c2\i) {};
  }
  \node[nd0] at (c20) {$b$};
  \node[nd0] at (c21) {$c$};

  \path (c20)--(c21) node[midway,above=.35,nd] (c10) {$a$};
  \path (c22)--(c23) node[midway,above=.35,nd] (c11) {};
  \path (c10)--(c11) node[midway,above=.3,nd] (c0) {};

  \draw (c10) -- (c0) -- (c11);
  \draw[blue] (c20) -- (c10) -- (c21);
  \draw (c22) -- (c11) -- (c23);

  \node[below=.9 of c0] {$\hAT_t$};
\end{tikzpicture}
  \caption{Illustration of ATC}
  \label{fig:ATC}
  \Description{The figure illustrates the Adaptive Tree Construction (ATC) process in three stages. 
On the left, the initial tree shows a parent node 'a' with its child nodes. 
The middle stage shows the result of a pruning operation, where the children of node 'a' have been removed. 
The final stage on the right shows the tree after the FO. Here, the previously pruned child nodes are restored as nodes 'b' and 'c' (highlighted in blue). This indicates that their values are estimated by uniformly dividing the frequency of their parent node 'a', rather than being queried directly.}
\end{figure}
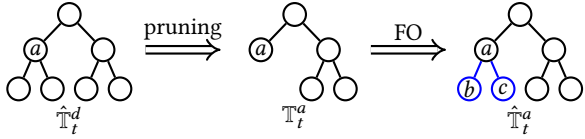

We use the example in \cref{fig:ATC} to further explain the above idea.
Given $\hDT_t$ from $\CM_1^t$, we check the node property from top to bottom,
layer by layer.
If some node $a$'s property $\hDT_t[a]<\vartheta_1$, where $\vartheta_1$ denotes
a threshold frequency, then we prune the subtree rooted at node $a$ while only
keeping node $a$.
This results a pruned tree $\AT_t$.
Then, we run two FOs to estimate the bottom two layers' node properties
(cf.~\cref{ss:tree}), and obtain a tree $\hAT_t$.
For pruned nodes, their properties are directly inferred from their parents,
e.g., $\hAT_t[b]=\hAT_t[c]=\hAT_t[a]/2$.
\begin{algorithm}[htp]
  \caption{Data-\textbf{A}daptive Private Binary \textbf{T}ree
    \textbf{C}onstruction (ATC)}
  \label{alg:ATC}
  \KwIn{Private binary tree $\hDT_t$, threshold $\vartheta_1$,
    privacy budget $\epsilon_{t,2}$}
  \KwOut{Private binary tree $\hAT_t$}

  \tcp{Build tree $\AT_t$ by pruning tree $\hDT_t$}
  Initialize an empty tree $\AT_t$ with only root node $r$\;\label{ln:prune_start}
  $\mathit{current\_layer}\gets\{r\}$, $level \gets 0$,
  $h\gets$ height of tree $\hDT_t$\;
  \While{$\mathit{current\_layer} \neq \emptyset$ and $level < h$}{
    $\mathit{next\_layer}\gets\emptyset$\;
    \ForEach{node $ a\in \mathit{current\_layer}$}{
      \If{$\hDT_t[a]\geq\vartheta_1$}{
        Equally split node $a$ into two child nodes $c_1, c_2$\;
        $\mathit{next\_layer}\gets\mathit{next\_layer}\cup\{c_1,c_2\}$\;
      }
    }
    $level \gets level + 1$\;
    $\mathit{current\_layer}\gets\mathit{next\_layer}$\;\label{ln:prune_end}
  }

  \tcp{Estimate each node's property using an FO (cf.~\cref{ss:tree})}
  $\hAT_t\gets\FO(\AT_t, \epsilon_{t,2})$\;\label{ln:FO}

  \tcp{Infer nodes' properties in the pruned tree}
  \ForEach{node $a\in\hAT_t$}{\label{ln:infer_start}
    \If{node $a$ has no child and its depth $<h$}{
      Crete two child nodes $c_1, c_2$ for node $a$\;
      $\hAT_t[c_1]\gets\hAT_t[a]/2$,
      $\hAT_t[c_2]\gets\hAT_t[a]/2$\;\label{ln:infer_end}
    }
  }
  \Return $\hAT_t$\;
\end{algorithm}

The pseudo-code of ATC is given in \cref{alg:ATC}.
We first create a pruned tree $\AT_t$ from the tree $\hDT_t$
(Lines~\ref{ln:prune_start} to~\ref{ln:prune_end}).
Then we invoke an FO to estimate each node's property and obtain the tree $\hAT_t$
(Line~\ref{ln:FO}).
Finally, pruned nodes' properties are directly inferred from their parents
(Lines~\ref{ln:infer_start} to~\ref{ln:infer_end}).
It is worth noting that the resulting tree $\hAT_t$ is still a perfect
binary tree.

\subsubsection{Grouping and Smoothing}
\label{ss:group_and_smooth}
The data-adaptive private binary tree $\hAT_t$ captures the instantaneous
statistics of the stream at time $t$, which may change a lot over time if the
stream is highly variable.
Instead of releasing $\hAT_t$ as the stream statistics, we propose to release
steady statistics, which can capture the underlying trend of the stream.
Queries on the steady statistics will be more accurate and meaningful than on
the instantaneous statistics.

To this end, inspired by Pegasus~\cite{Chen2017}, we introduce a \emph{grouping
  and smoothing} post-processing module.
Different from Pegasus which is designed for CDP and user data is available to
the server, we design grouping and smoothing module in the $w$-event LDP setting
without directly accessing user data.
The grouping and smoothing module processes trees $\{\hAT_{t'}\}_{t'\leq t}$,
and outputs steady statistics $\{\OT_{t'}\}_{t'\leq t}$, which are also private
binary trees.

\paragraph{Grouping}
For trees $\{\hAT_{t'}\}_{t'\leq t}$, we consider a same node $a$ at different
time in these trees.
We want to group the most recent similar node properties of node $a$ into a
group, and use the aggregate statistics of node property in this group as node
$a$'s property in the released tree $\OT_t$.
Specifically, let $T_a=\{t-l,\ldots,t-1\}$ denote a set of $l$ timestamps at
which node $a$'s properties are similar and belong to a group at time $t-1$.
At time $t$, we need to determine whether $t$ belongs to group $T_a$ or not.
Recall that $\hAT_t[a]$ is the estimated node property of $a$ at time $t$, and
we let $\AT_t[a]$ denote its true property value.
To measure the deviation of $\AT_t[a]$ to the group, we define the \emph{squared
  deviation} of node $a$ by
\[
  \sigma_a^2\triangleq (\AT_t[a] - \frac{1}{l}\sum_{i=1}^l\AT_{t-i}[a])^2,
\]
and if $\sigma_a^2\leq\vartheta_2$ for some threshold $\vartheta_2$, we add $t$ to
$T_a$; otherwise node $a$'s group becomes to $T_a=\{t\}$ at time $t$.
The challenge of the grouping operation is that the true node properties
$\{\AT_{t'}[a]\}_{t'\leq t}$ are unknown in the LDP setting, and we only know
their estimates $\{\hAT_{t'}[a]\}_{t'\leq t}$.
We provide an unbiased estimator of $\sigma_a^2$.

\begin{theorem}\label{thm:sigma}
  Estimator $\hat{\sigma}_a^2$ is an unbiased estimate of $\sigma_a^2$, i.e.,
  \[
    \hat{\sigma}_a^2
    = \big(\hAT_t[a] -\frac{1}{l}\sum_{i=1}^l\hAT_{t-i}[a]\big)^2
    -\frac{l+1}{l}\var(\epsilon_{t,2}).
  \]
\end{theorem}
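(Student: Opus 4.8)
The plan is to apply the standard bias--variance decomposition $\E[Z^2] = \var(Z) + (\E[Z])^2$ to the single random variable
$Z = \hAT_t[a] - \frac{1}{l}\sum_{i=1}^l \hAT_{t-i}[a]$,
which is exactly the quantity that is squared inside $\hat{\sigma}_a^2$. Taking the expectation of $Z^2$ will produce the true squared deviation $\sigma_a^2$ as the $(\E[Z])^2$ term, together with a variance term that the correction $-\frac{l+1}{l}\var(\epsilon_{t,2})$ is engineered to cancel. This mirrors the structure of the proof of \cref{thm:dis}, where an additive variance bias was removed by subtracting a known constant.

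First I would record the two properties of each node-property estimate that drive the argument. Since every $\hAT_{t'}[a]$ is produced by running an FO with budget $\epsilon_{t,2}$, it is unbiased, $\E[\hAT_{t'}[a]] = \AT_{t'}[a]$, and has variance $\var(\hAT_{t'}[a]) = \var(\epsilon_{t,2})$; this is precisely the FO variance fact established in \cref{ss:tree}. I would also invoke independence across timestamps: the perturbed reports at distinct timestamps are generated from disjoint user samples and independent coin flips, so $\hAT_t[a]$ and the estimates $\hAT_{t-1}[a],\ldots,\hAT_{t-l}[a]$ are mutually independent.

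Next I would compute the two pieces of the decomposition. By linearity and unbiasedness, $\E[Z] = \AT_t[a] - \frac{1}{l}\sum_{i=1}^l \AT_{t-i}[a]$, so $(\E[Z])^2 = \sigma_a^2$. For the variance, independence gives $\var(Z) = \var(\hAT_t[a]) + \var\big(\frac{1}{l}\sum_{i=1}^l \hAT_{t-i}[a]\big)$. The first term is $\var(\epsilon_{t,2})$; for the second, pulling out the factor $1/l^2$ and summing $l$ independent variances each equal to $\var(\epsilon_{t,2})$ yields $\frac{1}{l}\var(\epsilon_{t,2})$. Hence $\var(Z) = \frac{l+1}{l}\var(\epsilon_{t,2})$, so that $\E[Z^2] = \sigma_a^2 + \frac{l+1}{l}\var(\epsilon_{t,2})$, and subtracting the correction term immediately gives $\E[\hat{\sigma}_a^2] = \sigma_a^2$.

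The calculations are routine; the only point requiring genuine care is the modeling of the estimates that enter a group. The argument presumes that each $\hAT_{t-i}[a]$ is an independent, unbiased FO estimate sharing the common variance $\var(\epsilon_{t,2})$. I would make this assumption explicit, or justify it by noting that the relevant trees were published with budget $\epsilon_{t,2}$ from disjoint per-timestamp user samples, since if some grouped timestamps were produced by \emph{approximation} (where the value is copied rather than freshly estimated) the variance bookkeeping, and thus the exact form of the factor $\frac{l+1}{l}$, would have to be revisited. This is the step I expect to be the main obstacle to a fully rigorous statement.
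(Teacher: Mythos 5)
Your proof is correct and follows essentially the same route as the paper's: the paper expands the square term by term and applies $\E[X^2]=\var(X)+(\E X)^2$ to each piece, which is just an unpacked version of your bias--variance decomposition of $Z$, and it relies on exactly the same independence and equal-variance assumptions ($\var(\hAT_{t-i}[a])=\var(\epsilon_{t,2})$ for all $i$). The caveat you flag about grouped timestamps produced by approximation or published with differing budgets $\epsilon_{t-i,2}$ is a real gap that the paper's proof also silently glosses over, so raising it explicitly is a point in your favor rather than a defect.
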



\begin{proof}
First, compute the expectation of $\hat{\sigma}_a^2 + \frac{l+1}{l} \var(\epsilon_{t,2})$:
{\allowdisplaybreaks
\begin{align*}
  &\quad \E\big(\hat{\sigma}_a^2+\frac{l+1}{l}\var(\epsilon_{t,2})\big) \\
  &= \E\big[ \big(\hAT_t[a] -\frac{1}{l}\sum_{i=1}^l\hAT_{t-i}[a]\big)^2 \big] \\
  &= \E\big[ \hAT_t[a]^2 -\frac{2}{l}\hAT_t[a]\sum_{i=1}^l\hAT_{t-i}[a]
    +(\frac{1}{l}\sum_{i=1}^l\hAT_{t-i}[a])^2 \big] \\
  &= \E(\hAT_t[a]^2) -\frac{2}{l}\AT_t[a]\sum_{i=1}^l\AT_{t-i}[a]
    +\E\big[ (\frac{1}{l}\sum_{i=1}^l\hAT_{t-i}[a])^2 \big] \\
  &= \var(\hAT_t[a]) + \AT_t[a]^2 -\frac{2}{l}\AT_t[a]\sum_{i=1}^l\AT_{t-i}[a] \\
  &\quad +\var\big( \frac{1}{l}\sum_{i=1}^l\hAT_{t-i}[a] \big)
    +(\frac{1}{l}\sum_{i=1}^l\AT_{t-i}[a])^2 \\
  &= (\AT_t[a]-\frac{1}{l}\sum_{i=1}^l\AT_{t-i}[a])^2
    + \var(\hAT_t[a]) +\var\big( \frac{1}{l}\sum_{i=1}^l\hAT_{t-i}[a] \big) \\
  &= \sigma_a^2
    + \var(\hAT_t[a]) +\var\big( \frac{1}{l}\sum_{i=1}^l\hAT_{t-i}[a] \big)
\end{align*}
}
Notice that
\[
  \var(\hAT_t[a]) = \var(\epsilon_{t,2})
\]
and
\[
  \var\big( \frac{1}{l}\sum_{i=1}^l\hAT_{t-i}[a] \big)
  = \frac{1}{l^2}\sum_{i=1}^l\var(\hAT_{t-i}[a])
  = \frac{1}{l}\var(\epsilon_{t,2})
\]
Therefore, we conclude that
\[
  \E(\hat{\sigma}_a^2) = \sigma_a^2.
\]
This completes the proof.
\end{proof}

\paragraph{Smoothing}
At each time $t$, we now know every node $a$ belongs to a most recent group
$T_a$ with similar node properties at different time.
To release the statistics with regard to node $a$, we propose to compute the
aggregate statistics of node properties in the group, e.g., mean, median, etc.
Hence, we define
\[
  \OT_t[a] \triangleq \text{Aggregate}(\{\hAT_{t'}[a]\colon t'\in T_a\})
\]
and release tree $\OT_t$ as the stream statistics at time $t$.
Querying on $\OT_t$ will provide more accurate and meaningful results than on
$\hAT_t$.

\paragraph{Remarks}
It is noteworthy to mention that grouping and smoothing are post-processing
operations, both of which operate on perturbed statistics without consuming
additional privacy budgets.

\subsection{Budget-Free Multi-Task Streaming Query}

Based on the released stream statistics $\{\OT_{t'}\}_{t'\leq t}$, we can process
multiple streaming query tasks defined in \cref{ss:queries}.

\subsubsection{Counting Query}

To answer a counting query $Q_t^c(v,\Delta)$ at time $t$, we traverse each tree
$\OT_{t'}$ such that $t - t' < \Delta$, and retrieve the property of the leaf
node $a_v$ representing private value $v$, and hence
\[
  Q_t^c(v,\Delta) = \sum_{t'=t-\Delta+1}^t \OT_{t'}[a_v]n_{t'}.
\]

\subsubsection{Range Query}

To answer a range query $Q_t^r([v_1,v_2], \Delta)$ at time $t$, we find a
minimum cover in the released tree such that they jointly cover the query range
$[v_1,v_2]$ (cf.~\cref{fig:tree}).
Denote the minimum cover by $c(v_1,v_2)$, then
\[
  Q_t^r([v_1, v_2], \Delta) =
  \sum_{t'=t-\Delta+1}^t\sum_{a\in c(v_1,v_2)} \OT_{t'}[a]n_{t'}.
\]

\subsubsection{Event Monitoring}

Because an event monitoring query $Q_t^e(\alpha,\beta)$ actually consists of
several counting or range queries, it can be efficiently
answered as above.

\section{Privacy Analysis and Parameter Selection}

\subsection{Privacy Analysis}

We have the following privacy guarantee for MTSP-LDP.

\begin{theorem}
  MTSP-LDP satisfies $w$-event $\epsilon$-LDP for each user.
\end{theorem}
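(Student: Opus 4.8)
The plan is to reduce the claim to a per-user privacy-budget accounting: it suffices to show that, for every user and every window of $w$ consecutive timestamps, the total privacy budget consumed by MTSP-LDP is at most $\epsilon$. Given this, the $w$-event guarantee follows from sequential composition across the window together with the post-processing immunity of LDP already invoked in \cref{thm:dis,thm:sigma}. Accordingly, I would first isolate the only budget-consuming operations. Inspecting \cref{alg:mtsp-ldp}, a user's raw value $v_{it}$ is touched only through frequency-oracle calls: once in the dissimilarity estimation $\CM_1^t$ (building $\hDT_t$ with $\epsilon_{t,1}=\epsilon/(2w)$) and, when publication is chosen ($\epsilon_{t,2}>0$), once more inside ATC in $\CM_3^t$ (building $\hAT_t$ with $\epsilon_{t,2}$). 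The OBA step $\CM_2^t$ reads only the already-private dissimilarities, the grouping and smoothing step operates only on already-private trees, and the query step $\CM_4^t$ reads only the released trees $\{\OT_{t'}\}$; all three are therefore post-processing and consume no budget.

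Next I would bound the per-timestamp cost. Each tree is built by partitioning users into level-groups so that every user feeds exactly one FO; by parallel composition (as in \cref{ss:tree}), a tree built with FO budget $\epsilon'$ costs each user only $\epsilon'$, not $(L-1)\epsilon'$. Since the same data $D_t$ feeds both $\CM_1^t$ and, on publication, $\CM_3^t$, sequential composition gives a per-user, per-timestamp cost of $\epsilon_{t,1}+\epsilon_{t,2}$, with $\epsilon_{t,1}=\epsilon/(2w)$ always and $\epsilon_{t,2}=0$ on approximation timestamps; the warm-up timestamps $t\le w$ instead use a single FO of budget $\epsilon/w$.

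I would then sum over an arbitrary window $W=\{t-w+1,\ldots,t\}$. When $W$ lies entirely in the steady phase, the dissimilarity contributions sum to exactly $w\cdot\epsilon/(2w)=\epsilon/2$, while $\sum_{t'\in W}\epsilon_{t',2}$ is held to at most $\epsilon/2$ by the remaining-budget cap $\epsilon_t^{\mathit{rm}}$ enforced in Line~\ref{ln:epsilon} of \cref{alg:oba}, so the window total is at most $\epsilon$; when $W$ lies entirely in the warm-up phase the total is $w\cdot\epsilon/w=\epsilon$.

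The hard part is the bookkeeping behind $\epsilon_t^{\mathit{rm}}$ and the transition windows that straddle the two phases. I would prove the invariant that $\epsilon_t^{\mathit{rm}}$ faithfully tracks publication budget committed to timestamps still inside the current window and reclaims it as timestamps slide out, so that the cap in Line~\ref{ln:epsilon} guarantees $\sum_{t'\in W}\epsilon_{t',2}\le\epsilon/2$ for every window $W$ simultaneously, not merely for the windows the algorithm inspects at its own decision points; and for a straddling window I would check that the warm-up cost $\epsilon/w$ is charged against the same per-window ledger so the combined spending never exceeds $\epsilon$. Once the invariant ``$\sum_{t'\in W}(\epsilon_{t',1}+\epsilon_{t',2})\le\epsilon$ for every $w$-window $W$'' is established, the $w$-event $\epsilon$-LDP guarantee follows directly from the definition.
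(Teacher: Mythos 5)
Your proposal follows essentially the same route as the paper's own proof: isolate the FO invocations behind $\CM_1^t$ and the publication step as the only budget-consuming operations, treat OBA, grouping/smoothing, and the query layer as post-processing, use parallel composition across tree levels so each tree costs a user only its FO budget, and then account $\epsilon/2$ for dissimilarity estimation plus at most $\epsilon/2$ for publication over any window of length $w$. If anything you are more careful than the paper, which simply asserts that the allocation strategy keeps the cumulative publication budget within $\epsilon/2$ per window, whereas you correctly flag the $\epsilon_t^{\mathit{rm}}$ ledger invariant and the warm-up/steady-phase transition windows as the steps that still require an explicit argument.
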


\begin{proof}
  MTSP-LDP ensures that the total privacy budget consumed within any sliding
  window of length $w$ does not exceed $\epsilon$.
  Specifically, at each timestamp $t$, the budget $\epsilon$ is partitioned into
  two parts, i.e., $\epsilon/2$ is evenly divided among the $w$ mechanisms
  $\CM_1^{t'}$ where $t'\in [t-w+1, t]$, and the remaining $\epsilon/2$ is
  dynamically distributed among the corresponding $w$ mechanisms $\CM_2^{t'}$.
  As explained in \cref{ss:optimal_budget}, the allocation strategy guarantees
  that the cumulative budget consumed by all $\CM_2^{t'}$ over any sliding
  window of length $w$ does not exceed $\epsilon/2$.
  Mechanisms $\CM_3^t$ and $\CM_4^t$ only process perturbed data and do not
  access user data, i.e., they are post-processing operations, thus incurring no
  additional privacy cost.
  Therefore, the total privacy budget spent within any sliding window is bounded
  by $\epsilon$, and MTSP-LDP satisfies $w$-event $\epsilon$-LDP.
\end{proof}

\subsection{Selection of Thresholds $\vartheta_1$ and $\vartheta_2$}

Two important parameters in MTSP-LDP are thresholds $\vartheta_1$ and $\vartheta_2$
used in \cref{ss:tree_publication}.
We discuss how to choose these thresholds to maximize the overall utility of our
framework.

\subsubsection{Selection of Threshold $\vartheta_1$}
Threshold $\vartheta_1$ is used for pruning tree $\hDT_t$ in \cref{alg:ATC} to
obtain a data-adaptive tree $\hAT_t$ after invoking FOs for each layer.
We find that invoking FOs on the pruned tree may reduce estimation error if
threshold $\vartheta_1$ is carefully chosen.
To understand the reason, without loss of generality, let us consider a tree
without pruning and a tree pruned at node $a$, respectively (see
\cref{fig:branch_of_tree}).
If we spend the same amount of privacy budget $\epsilon_{t,2}$ and use the same
number of users to estimate the node frequencies in each layer of these two
trees, then the estimate of each node frequency will have the same variance
$\var(\epsilon_{t,2})$.
While for the pruned tree, properties of those pruned nodes are approximated by
recursively halving their parent nodes' properties.
Therefore, estimation errors are different only for the two subtrees rooted at
node $a$, excluding node $a$.
Assume the subtree has height $h+1$.

For the left tree, the total estimation error of the subtree rooted at node $a$
(excluding node $a$) is
\[
  \err_1 \triangleq
  \sum_{i=1}^h 2^i\E[(\hat{f_i} - f_i)^2]
  = 2(2^h-1)\var(\epsilon_{t,2}).
\]

For the right tree, the total estimation error of the subtree rooted at node $a$
(excluding node $a$) is
\begin{align*}
  \err_2
  &\triangleq \sum_{i=1}^h 2^i\E[(\hat{f}_i-f_i)^2]
    = \sum_{i=1}^h 2^i\E[(\frac{\hat{f}_0}{2^i}-f_i)^2] \\
  &= \sum_{i=1}^h 2^i\E( \frac{\hat{f}_0^2}{2^{2i}} - \frac{\hat{f}_0f_i}{2^{i-1}}
    +f_i^2 )
  = \sum_{i=1}^h 2^i(\frac{\var(\epsilon_{t,2}) + f_0^2}{2^{2i}}
    - \frac{f_0f_i}{2^{i-1}} + f_i^2) \\
  &\leq \sum_{i=1}^h 2^i(\frac{\var(\epsilon_{t,2})+\vartheta_1^2}{2^{2i}}+\vartheta_1^2)
    = \sum_{i=1}^h(\frac{\var(\epsilon_{t,2})+\vartheta_1^2}{2^i} + 2^i\vartheta_1^2) \\
  &\leq \var(\epsilon_{t,2}) + \vartheta_1^2 + 2(2^h-1)\vartheta_1^2 \\
  &= \var(\epsilon_{t,2}) + (2^{h+1}-1)\vartheta_1^2.
\end{align*}
If we require the estimation error for the right tree is no larger than the left
tree, i.e., $\err_2\le \err_1$, then we obtain
\begin{equation}\label{eq:theta1}
  \vartheta_1
  \leq \sqrt{\big(\frac{2^{h+1}-3}{2^{h+1}-1}\big)\cdot\var(\epsilon_{t,2})}
  \leq \sqrt{\var(\epsilon_{t,2})}.
\end{equation}
Therefore, if threshold $\vartheta_1$ is chosen according to
Condition~\eqref{eq:theta1}, the pruned tree will have smaller overall
estimation error than the full binary tree, thus demonstrating the usefulness of
tree pruning operation in \cref{alg:ATC}.
\begin{figure}[htp]
  \centering
  \begin{tikzpicture}[
  thick, font=\small, inner sep=0pt,
  nd/.style={circle,minimum size=2.5mm},
  nd0/.style={nd,draw,densely dotted},
  nd1/.style={nd,draw},
  line0/.style={densely dotted},
  line1/.style={},
  ]

  \coordinate (O1);
  \foreach \i in {0,...,7}{
    \node[nd1,right=.4*\i of O1] (a3\i) {};
  }

  \foreach \i in {0,...,3}{
    \path (a3\the\numexpr\i*2\relax) -- (a3\the\numexpr\i*2+1\relax)
    node[midway,above=.3,nd1] (a2\i) {};

    \draw (a3\the\numexpr\i*2\relax) -- (a2\i) -- (a3\the\numexpr\i*2+1\relax);
  }

  \foreach \i in {0,1}{
    \path (a2\the\numexpr\i*2\relax) -- (a2\the\numexpr\i*2+1\relax)
    node[midway,above=.3,nd1] (a1\i) {};

    \draw (a2\the\numexpr\i*2\relax) -- (a1\i) -- (a2\the\numexpr\i*2+1\relax);
  }
  \path (a10) -- (a11) node[midway,above=.3,nd1] (a0) {};
  \draw (a10) -- (a0) -- (a11);

  \node at (a11) {$a$};
  \node[right=.7 of a11] (l0) {$0$};
  \node at (l0 |- a20) (l1) {1};
  \node at (l0 |- a30) (l2) {$2$};
  \node[below=0 of l2] (ld) {$\vdots$};
  \node[below=.1 of ld] (lh) {$h$};
  \node at (a11 |- ld) {$\vdots$};

  \coordinate[right=4 of O1] (O2);
  \foreach \i/\j in {0/1,1/1,2/1,3/1,4/0,5/0,6/0,7/0}{
    \node[nd\j,right=.4*\i of O2] (b3\i) {};
  }

  \foreach \i/\j in {0/1,1/1,2/0,3/0}{
    \path (b3\the\numexpr\i*2\relax) -- (b3\the\numexpr\i*2+1\relax)
    node[midway,above=.3,nd\j] (b2\i) {};

    \draw[line\j] (b3\the\numexpr\i*2\relax) -- (b2\i) --
    (b3\the\numexpr\i*2+1\relax);
  }

  \foreach \i/\j in {0/1,1/0}{
    \path (b2\the\numexpr\i*2\relax) -- (b2\the\numexpr\i*2+1\relax)
    node[midway,above=.3,nd1] (b1\i) {};

    \draw[line\j] (b2\the\numexpr\i*2\relax) -- (b1\i) --
    (b2\the\numexpr\i*2+1\relax);
  }

  \path (b10) -- (b11) node[midway,above=.3,nd1] (b0) {};
  \draw (b10) -- (b0) -- (b11);


  \node at (b11) {$a$};
  \node[right=.7 of b11] (l0) {$0$};
  \node at (l0 |- b20) (l1) {1};
  \node at (l0 |- a30) (l2) {$2$};
  \node[below=0 of l2] (ld) {$\vdots$};
  \node[below=.1 of ld] (lh) {$h$};
  \node at (b11 |- ld) {$\vdots$};

\end{tikzpicture}
  \caption{Example of a same tree without pruning (left) and pruning at node $a$
  (right)}
  \label{fig:branch_of_tree}
  \Description{This figure illustrates a comparison between a tree structure before and after pruning. The tree on the left shows the original structure, where all branches are intact.}
\end{figure}
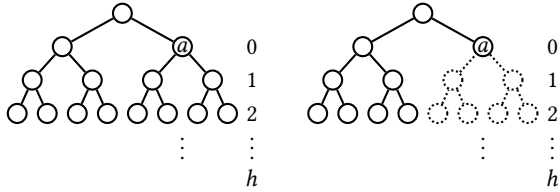

\subsubsection{Selection of Threshold $\vartheta_2$}

The threshold $\vartheta_2$ is used in the grouping and smoothing module
(cf.~\cref{ss:group_and_smooth}).
We find that the proper choice of $\vartheta_2$ can also reduce estimation error.

To understand the reason, let us focus on an arbitrary node $a$ in the tree.
Let $T_a=\{t-l,\ldots,t\}$ denote a group of $l+1$ timestamps at which node
$a$'s properties are close.
Let $g_i\triangleq\AT_{t-i}[a]$ denote the true property value and
$\hat{g}_i\triangleq\hAT_{t-i}[a]$ denote its estimate.
Let $\bar{g}_0=\frac{1}{l+1}\sum_{i=0}^{l}\hat{g}_i$ denote the smoothed
property estimate at time $t$ (i.e., assume the aggregation function is the mean).

Without the grouping and smoothing operation, the estimation variance at time $t$
is
\[
  \err_3 \triangleq \E[(\hat{g}_0-g_0)^2] = \var(\epsilon_{t,2}).
\]
In contrast, if we apply the grouping and smoothing operation, the estimation
variance becomes
\begin{align*}
  \err_4
  &\triangleq \E[(\bar{g}_0-g_0)^2]
    = \E\big[(\frac{1}{l+1}\sum_{i=0}^l\hat{g}_i - g_0)^2\big] \\
  &= \E\big[(\frac{1}{l+1}\sum_{i=0}^l\hat{g}_i)^2
    -\frac{2}{l+1}g_0\sum_{i=0}^l\hat{g}_i  + g_0^2\big] \\
  &= \var(\frac{1}{l+1}\sum_{i=0}^l\hat{g}_i)
    + (\frac{1}{l+1}\sum_{i=0}^lg_i - g_0)^2 \\
  &= (\frac{1}{l+1})^2\sum_{i=0}^l\var(\epsilon_{t-i,2})
    + \big[\frac{l}{l+1}(\frac{1}{l}\sum_{i=1}^lg_i - g_0)\big]^2 \\
  &= (\frac{1}{l+1})^2\sum_{i=0}^l\var(\epsilon_{t-i,2})
    + (\frac{l}{l+1})^2\sigma_a^2.
\end{align*}
We require $\err_4\le\err_3$, implying
\[
  \sigma_a^2\leq\frac{1}{l^2}\big[ (l+1)^2\var(\epsilon_{t,2})
  -\sum_{i=0}^l\var(\epsilon_{t-i,2})\big].
\]
Therefore, if we maintain the group for node $a$ as long as the following condition holds, i.e.,
\begin{equation}\label{eq:theta2}
  \vartheta_2
  \leq\frac{1}{l^2}\big[ (l+1)^2\var(\epsilon_{t,2})
  -\sum_{i=0}^l\var(\epsilon_{t-i,2})\big],
\end{equation}
then the grouping and smoothing operation can reduce estimation error in our
framework.

\section{Evaluation} \label{sec:Evaluation}

In this section, we evaluate the performance of MTSP-LDP on real-world datasets.

\subsection{Datasets}

We use four publicly available datasets, and we briefly describe them below.
\begin{itemize}
    \item Cosmetics\footnotemark[1] is a seven-month dataset (Oct.~2019 to Apr.~2020) from a large e-commerce platform. We extract the last viewed item of each user per day, and this forms a value set of size $329$. The stream consists of $1,654,771$ users with $5,112$ timestamps.
    
    \item Taxi\footnotemark[2] is a collection of NYC yellow taxi trips from Jan.~to Sep.~2024.
    Fare values are aggregated on a daily basis, and values above the $99.9$-th percentile are removed. This yields a value set of size $150$.
    The stream consists of $139,713$ users with $275$ timestamps.
    
    \item Loan\footnotemark[3] is a collection of Lending Club loan records from
    2007 to 2018.
    We aggregate loan amounts by issuance month and round values downward.
    This results in a value set of size $1,719$.
    The stream consists of $61,992$ users with $138$ timestamps.
    
    \item Foursquare\footnotemark[4] is a collection of check-ins collected from
    a location-based social network Foursquare from Apr.~2012 to Sep.~2013.
    The value set is a set of cities of size $415$.
    We count daily city-level check-ins and obtain a stream of $266,909$ users over
    $447$ timestamps.
\end{itemize}
The statistics of these datasets are summarized in \cref{tab:dataset}.

\begin{table}[htp]
  \centering
  \caption{Dataset statistics}
  \label{tab:dataset}
  \begin{tblr}{hlines,
    colspec={c|c|c|c},
    cells={font=\small,c,m},
    row{1} = {font=\small\bfseries,c},
    rowsep=0pt,
    }
    \hline
    dataset    & stream length & domain size $d$ & \# of users $n$ \\
    Cosmetics  & $5,112$       & $329$           & $1,654,771$     \\
    Taxi       & $275$         & $150$           & $139,713$       \\
    Loan       & $138$         & $1,719$         & $61,992$        \\
    Foursquare & $447$         & $415$           & $266,909$       \\
   \end{tblr}
\end{table}

\footnotetext[1]{\url{https://www.kaggle.com/datasets/mkechinov/ecommerce-events-history-in-cosmetics-shop}}
\footnotetext[2]{\url{https://www.nyc.gov/site/tlc/about/tlc-trip-record-data.page}}
\footnotetext[3]{\url{https://www.kaggle.com/datasets/wordsforthewise/lending-club}}
\footnotetext[4]{\url{https://sites.google.com/site/yangdingqi/home/foursquare-dataset}}

\subsection{Settings}

\subsubsection{Metrics}

To quantify the performance of the algorithm, we use different metrics tailored
to each query task.
Let $\hat\theta$ denote an estimator of the true value $\theta$, and let
$\hat\theta_i$ denote the estimate of the true value $\theta_i$ in the $i$-th query,
for $i=1,\ldots,n_q$.
\begin{itemize}
    \item {Mean Absolute Error (MAE)}.
MAE quantifies the average magnitude of estimation errors and serves as a
standard indicator of overall accuracy across all query types.
The MAE of an estimator $\hat\theta$ is defined as
\[
  \mathsf{MAE}(\hat\theta)
  \triangleq \E(|\hat\theta - \theta |)
  = \frac{1}{n_q}\sum_{i=1}^{n_q} |\hat\theta_i - \theta_i |.
\]
    \item {Mean Relative Error (MRE)}.
MRE measures the error relative to the true value, offering
a scale-invariant perspective on accuracy.
The MRE of an estimator $\hat\theta$ is defined as
\[
  \mathsf{MRE}(\hat\theta)
  \triangleq \E(\frac{|\hat\theta - \theta |}{\theta})
  = \frac{1}{n_q}\sum_{i=1}^{n_q}\frac{|\hat\theta_i - \theta_i|}{\theta_i}.
\]
    \item {Receiver Operating Characteristic (ROC) Curve}.
For event monitoring, we use the ROC curve to evaluate the trade-off between
true positive and false positive rates across varying decision thresholds.
The Area Under the Curve (AUC) is used as a key metric for evaluating the
model's overall performance.
A higher AUC indicates better event detection performance.
\end{itemize}



\subsubsection{Baselines}

We compare MTSP-LDP with several state-of-the-art $w$-event LDP algorithms,
including LBA, LBD, LSP, and LBU, as introduced in \cref{ss:competitors}.
To ensure fair comparison and reproducibility, all methods, including MTSP-LDP,
were implemented in Python under a unified experimental framework.
All experiments were conducted on a desktop equipped with an Intel Core i7-10700
CPU (2.90 GHz) and 16 GB of RAM.

\subsection{Results}

\subsubsection{Evaluation for Counting Queries}

\begin{figure*}[t]
  \centering
  \includegraphics[width=.4\linewidth]{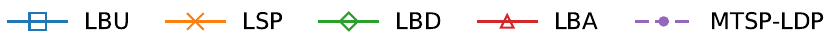}

  \subfloat[Cosmetics ($w=20$)]{%
    \includegraphics[width=.25\linewidth]{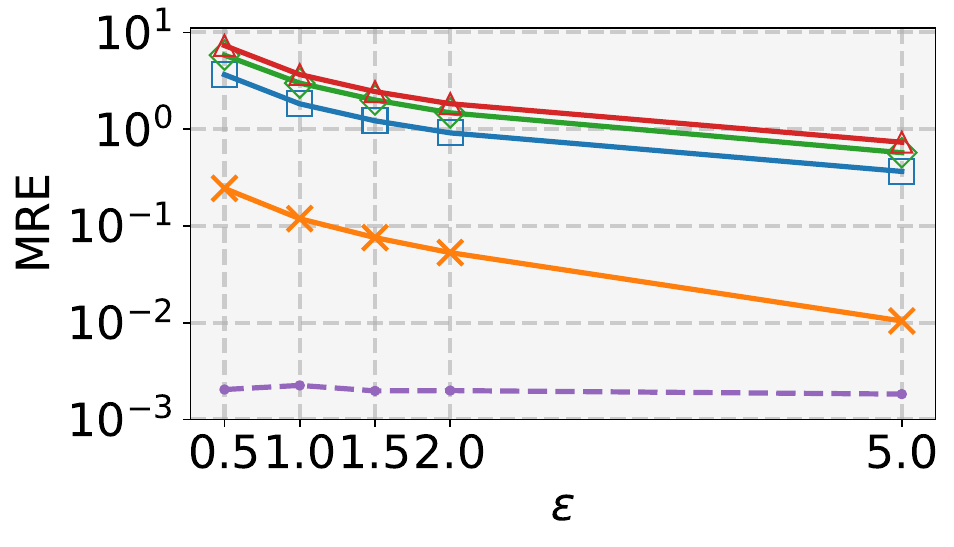}}
  \subfloat[Cosmetics ($\epsilon=1$)]{%
    \includegraphics[width=.25\linewidth]{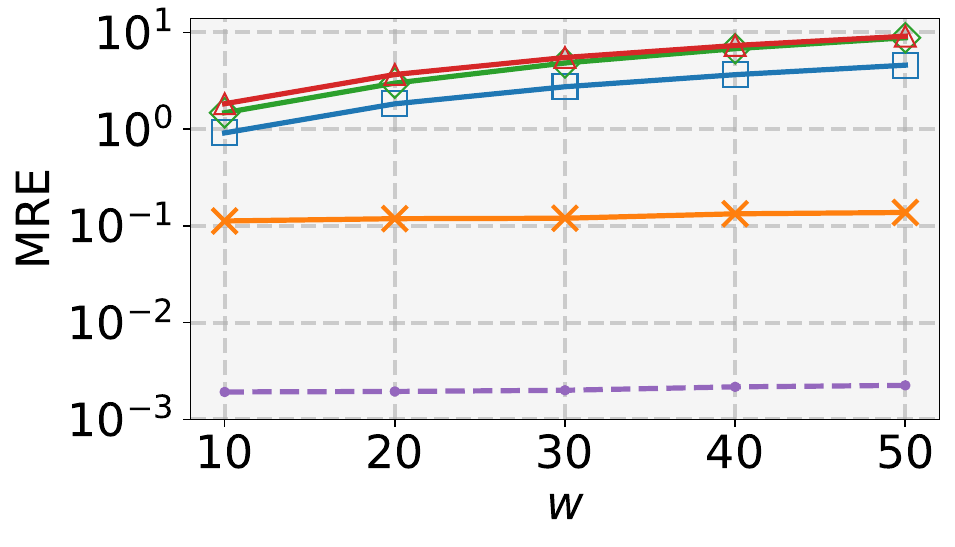}}
  \subfloat[Cosmetics ($w=20$)]{%
    \includegraphics[width=.25\linewidth]{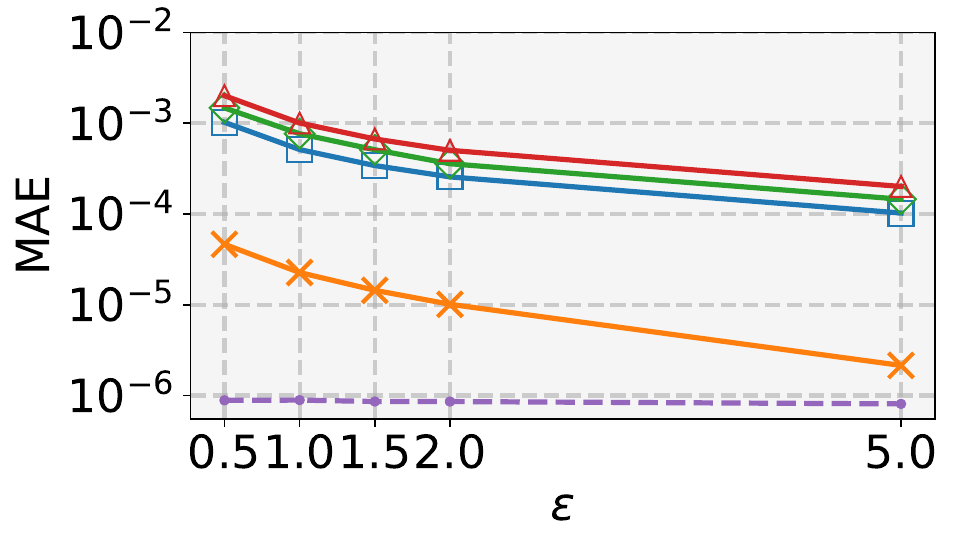}}
  \subfloat[Cosmetics ($\epsilon=1$)]{%
    \includegraphics[width=.25\linewidth]{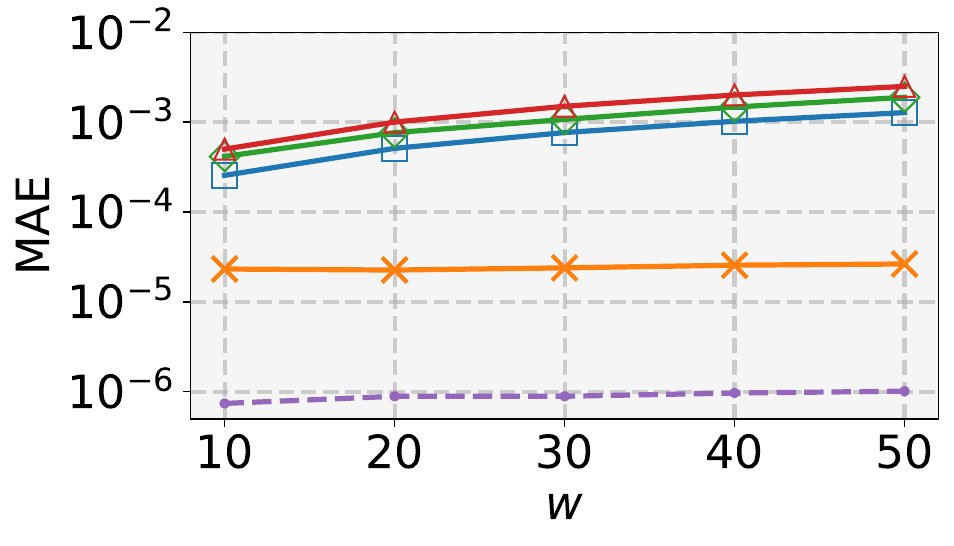}}

  \subfloat[Taxi ($w=20$)]{%
    \includegraphics[width=.25\linewidth]{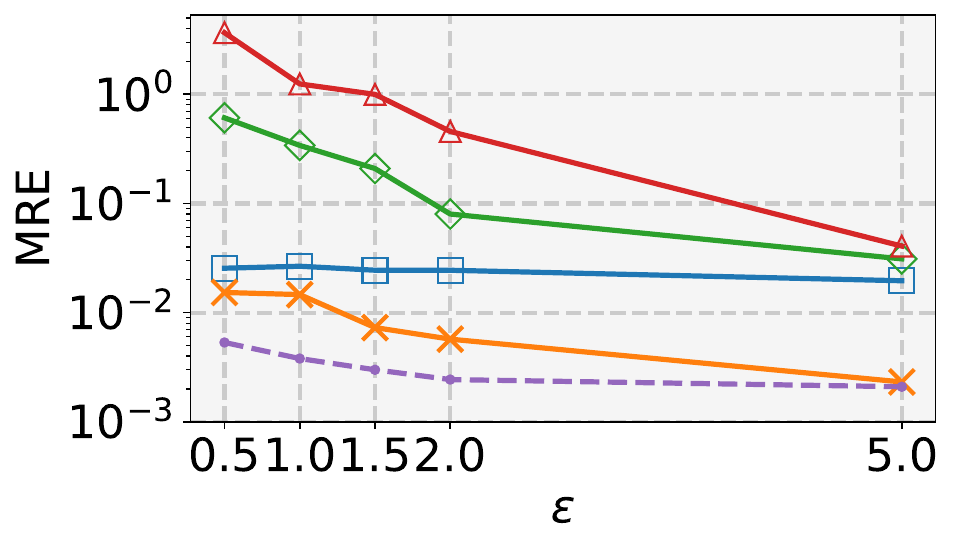}}
  \subfloat[Taxi ($\epsilon=1$)]{%
    \includegraphics[width=.25\linewidth]{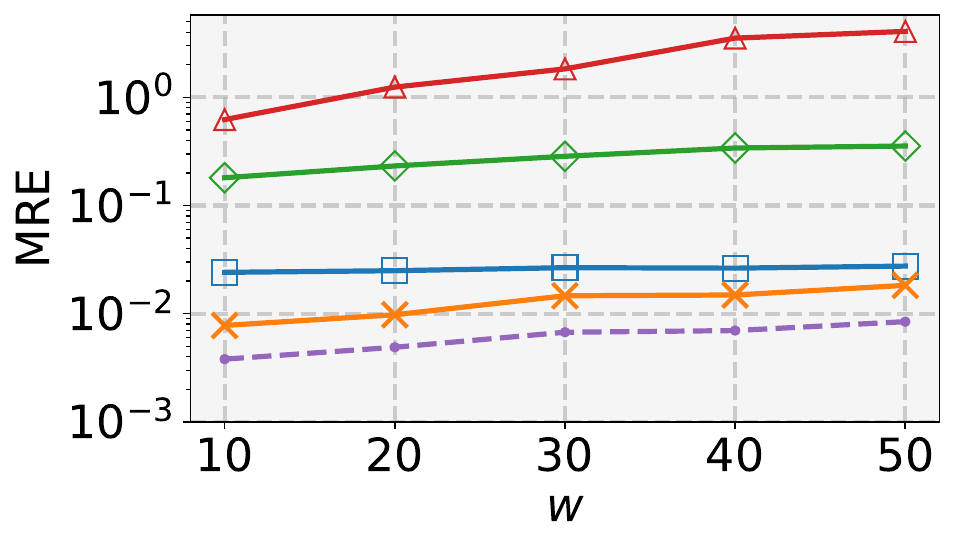}}
  \subfloat[Taxi ($w=20$)]{%
    \includegraphics[width=.25\linewidth]{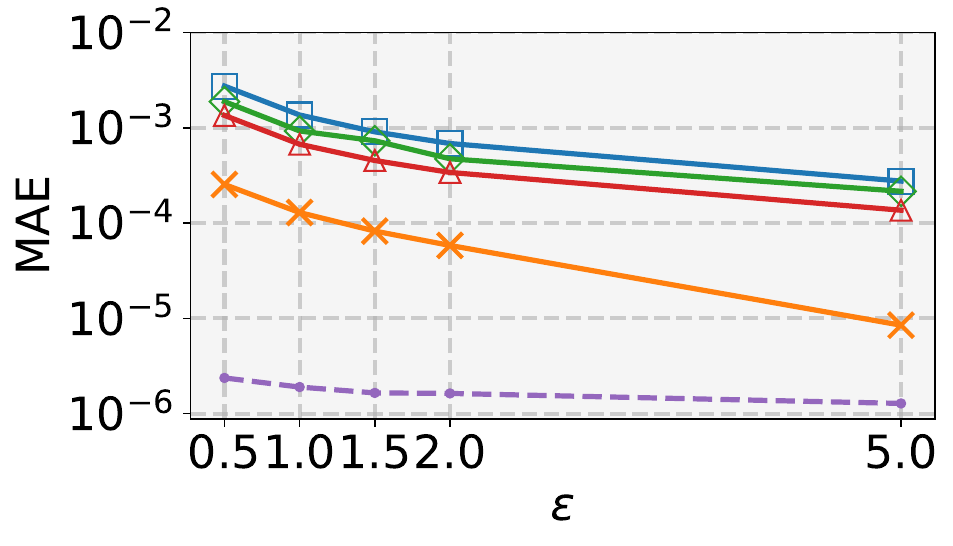}}
  \subfloat[Taxi ($\epsilon=1$)]{%
    \includegraphics[width=.25\linewidth]{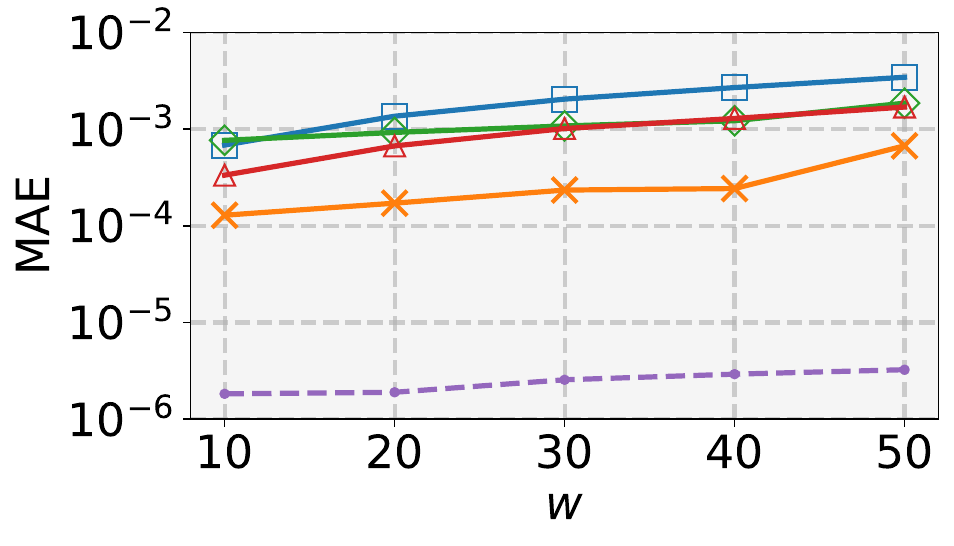}}

  \subfloat[Loan ($w=20$)]{%
    \includegraphics[width=.25\linewidth]{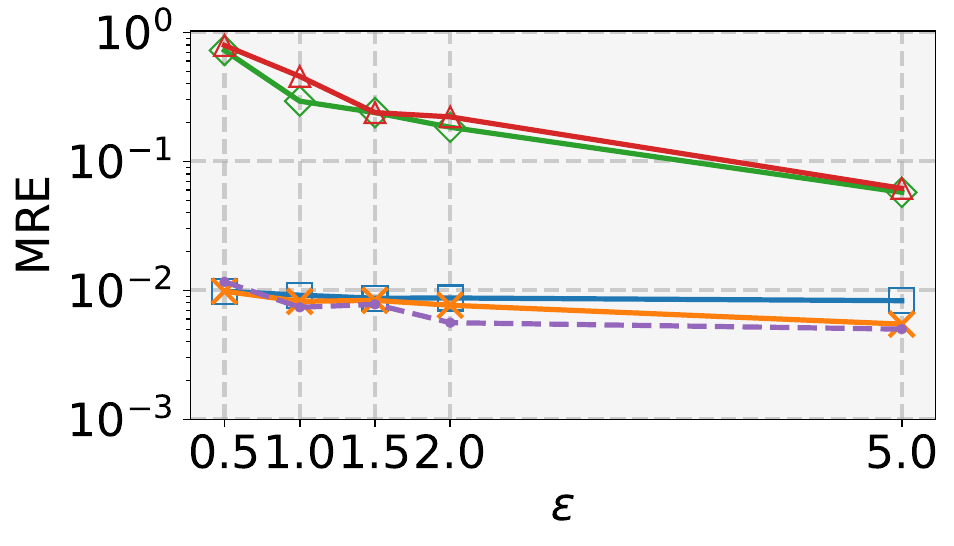}}
  \subfloat[Loan ($\epsilon=1$)]{%
    \includegraphics[width=.25\linewidth]{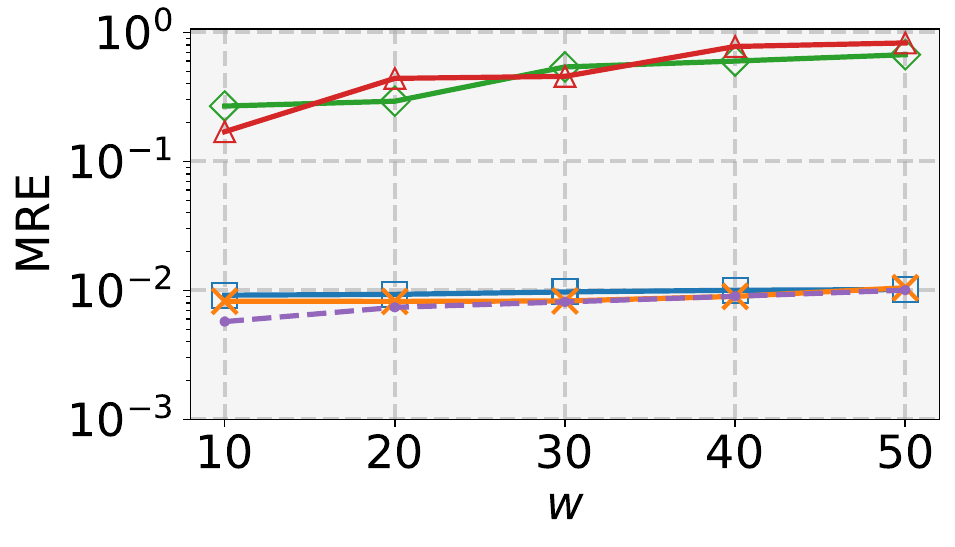}}
  \subfloat[Loan ($w=20$)]{%
    \includegraphics[width=.25\linewidth]{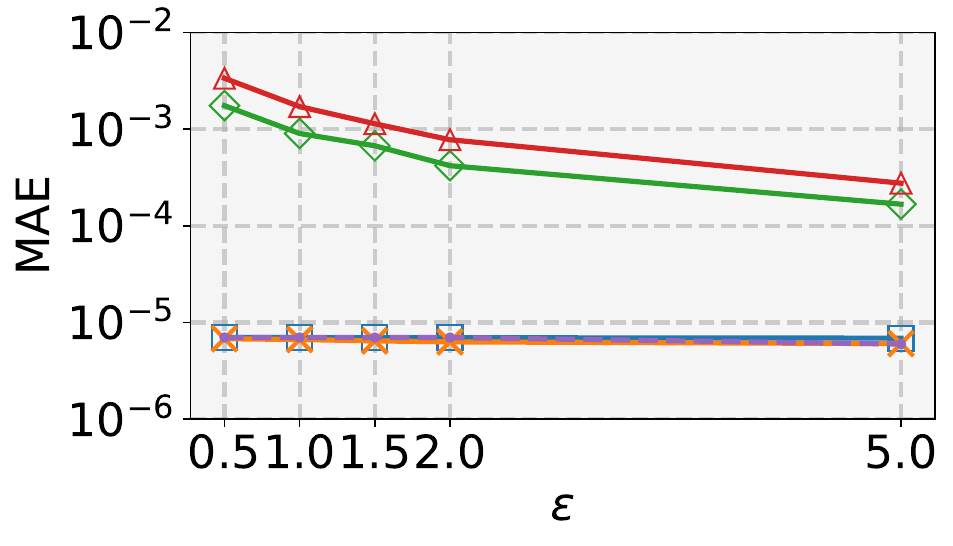}}
  \subfloat[Loan ($\epsilon=1$)]{%
    \includegraphics[width=.25\linewidth]{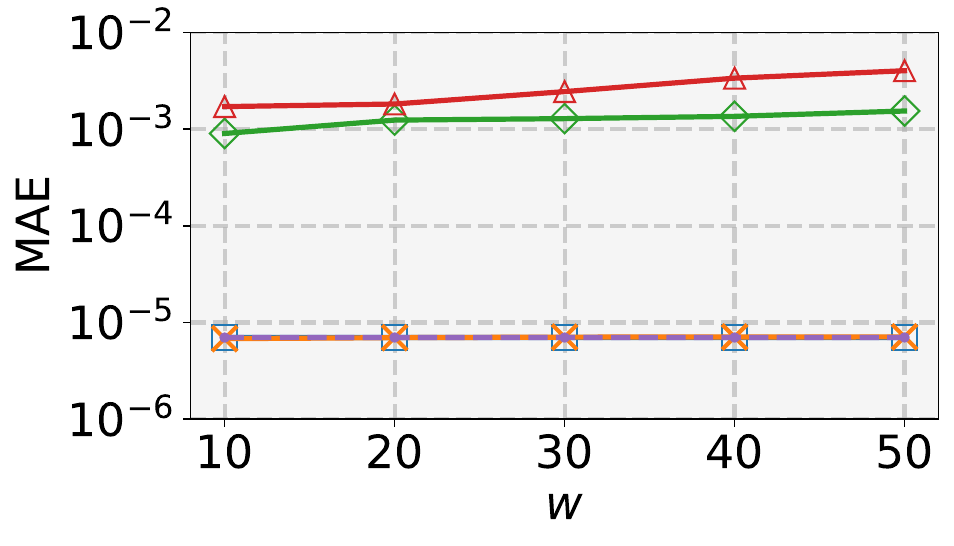}}

  \subfloat[Foursquare ($w=20$)]{%
    \includegraphics[width=.25\linewidth]{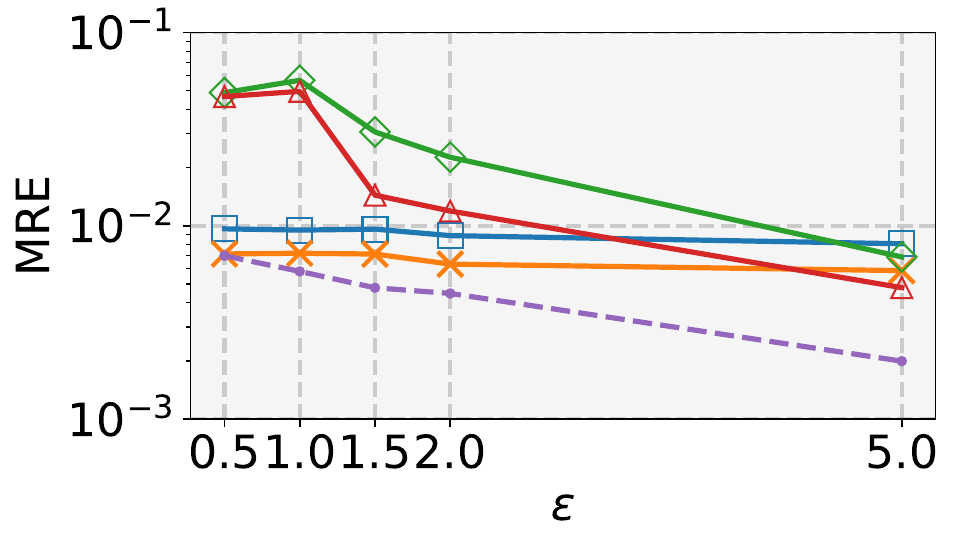}}
  \subfloat[Foursquare ($\epsilon=1$)]{%
    \includegraphics[width=.25\linewidth]{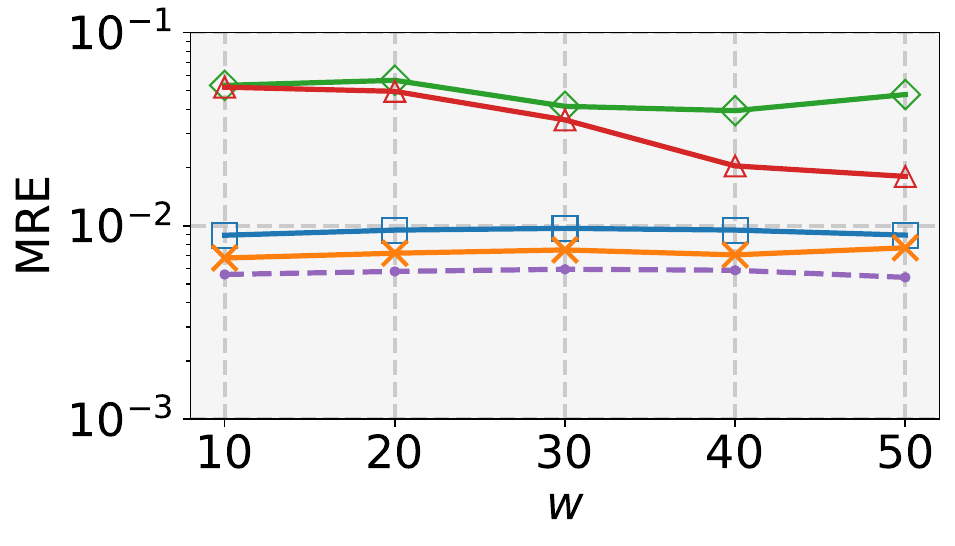}}
  \subfloat[Foursquare ($w=20$)]{%
    \includegraphics[width=.25\linewidth]{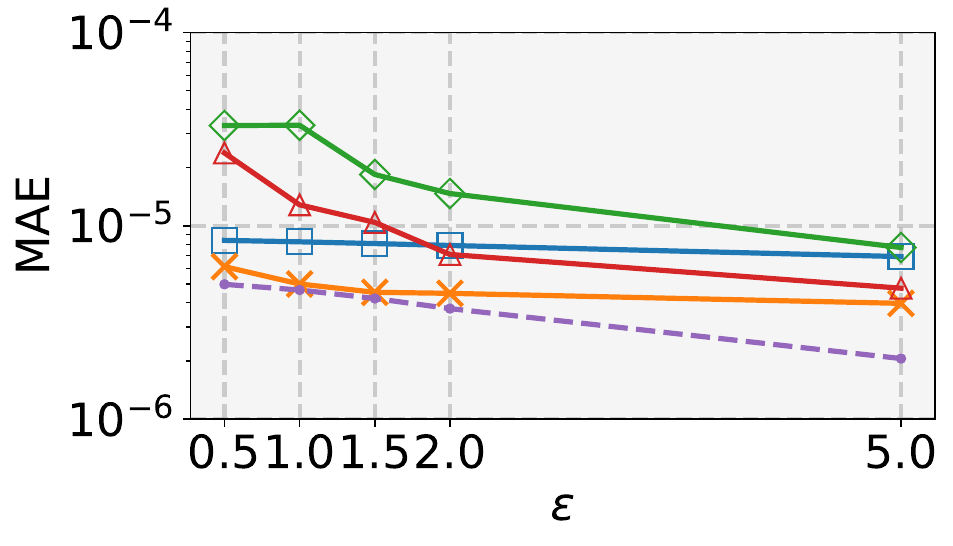}}
  \subfloat[Foursquare ($\epsilon=1$)]{%
    \includegraphics[width=.25\linewidth]{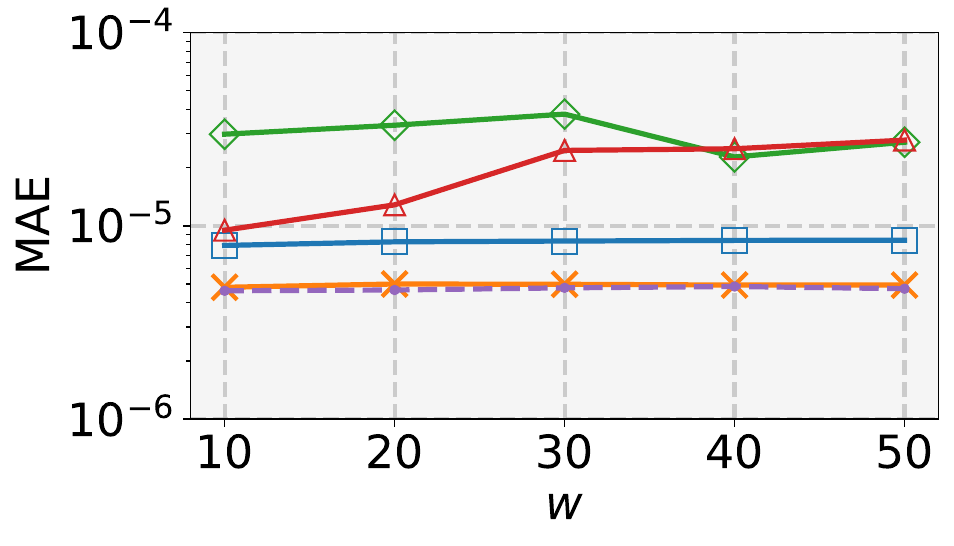}}
  \caption{Counting query evaluation with varying $\epsilon$ and $w$}
  \label{fig:histogram}
  \Description{The figure presents a 4x4 grid of line charts evaluating the performance of counting queries. 
The rows correspond to four datasets: Cosmetics, Taxi, Loan, and Foursquare. 
The columns represent different metrics and varying parameters: the first and second columns show Mean Relative Error (MRE) varying epsilon and window size w, respectively; the third and fourth columns show Mean Absolute Error (MAE) varying epsilon and w. 
Five methods are compared: LBU, LSP, LBD, LBA, and the proposed MTSP-LDP. 
In all 16 subplots, the Y-axis is on a logarithmic scale. 
Visually, the proposed MTSP-LDP method (represented by a purple dashed line) consistently stays at the bottom of all charts, indicating it achieves the lowest error by a significant margin (often orders of magnitude) compared to the four baseline methods across all datasets and parameter settings.}
\end{figure*}

In order to compare MTSP-LDP fairly with other baselines which are mainly
designed for frequency histogram estimation, we let $\hat\theta_{t,v}\triangleq
Q_t^c(v,1)/n_t$, and hence $\{\hat\theta_{t,v}\}_{v\in\Omega}$ is an estimate of
the frequency histogram of the stream at time $t$.
Then we evaluate the MAE and MRE of $\hat\theta_{v,t}$, averaged over time, and show
the results on the four datasets in \cref{fig:histogram}.

We observe that, for a fixed sliding window size $w$, as the privacy budget
$\epsilon$ increases from $0.5$ (i.e., high privacy protection) to $5$ (i.e., low
privacy protection), the estimation error for all algorithms decreases.
Similarly, for a fixed privacy budget $\epsilon$, as the window size $w$
increases from $10$ to $50$, the estimation error increases accordingly.
These results highlight a clear trade-off between privacy and data utility,
where larger privacy budgets or smaller window sizes generally lead to better
utility.

Notably, MTSP-LDP delivers consistently high accuracy across diverse datasets.
As discussed in \cref{sec:method}, this can be attributed to the fact that
existing algorithms do not fully leverage the temporal correlations in the data
stream when allocating privacy budgets.
Although the LBD and LBA methods conserve privacy budgets during periods of
minimal data change, their strategies focus solely on the current timestamp
without considering the data distribution across the entire time window.
This limitation is particularly critical in the $w$-event setting, where
capturing the overall data distribution is crucial for accurate estimates.
Besides, MTSP-LDP demonstrates a particularly significant advantage when the
privacy budget is small, as smaller budgets inherently introduce larger errors,
amplifying the weaknesses of other methods.
This makes MTSP-LDP highly effective in scenarios requiring strict privacy
guarantees.

\subsubsection{Evaluation for Range Queries}

\begin{figure*}[t]
  \centering
  \includegraphics[width=.4\linewidth]{figs/legend_methods.pdf}

  \subfloat[Cosmetics ($w=20$)]{%
    \includegraphics[width=.25\linewidth]{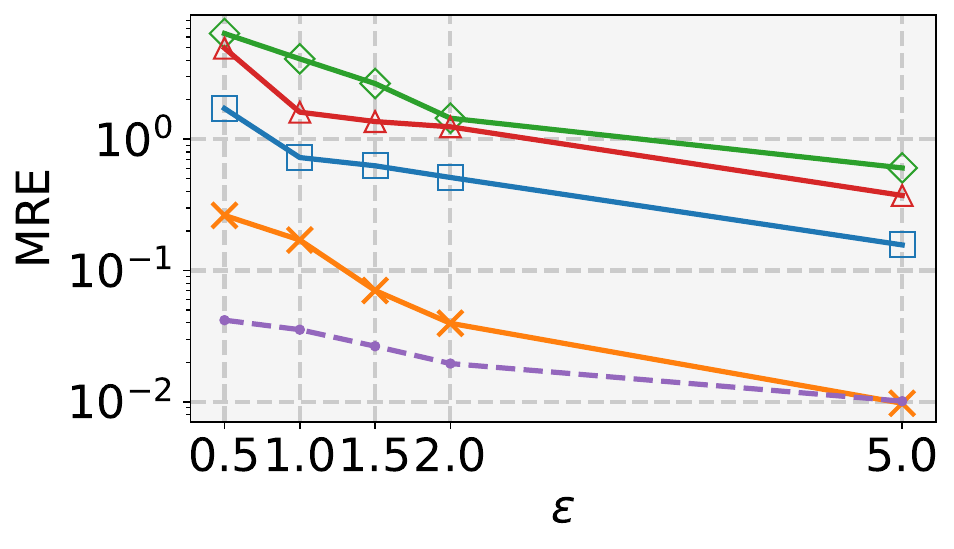}}
  \subfloat[Cosmetics ($\epsilon=1$)]{%
    \includegraphics[width=.25\linewidth]{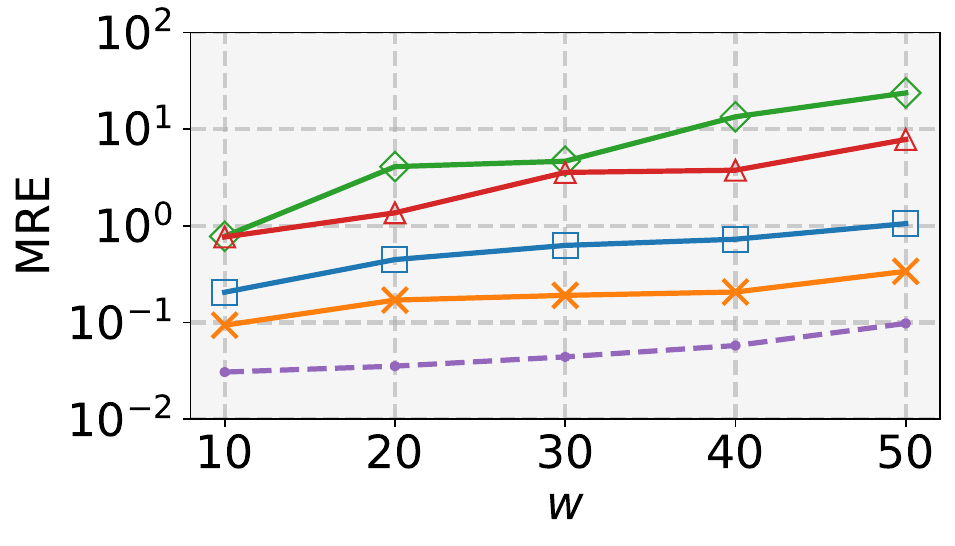}}
  \subfloat[Cosmetics ($w=20$)]{%
    \includegraphics[width=.25\linewidth]{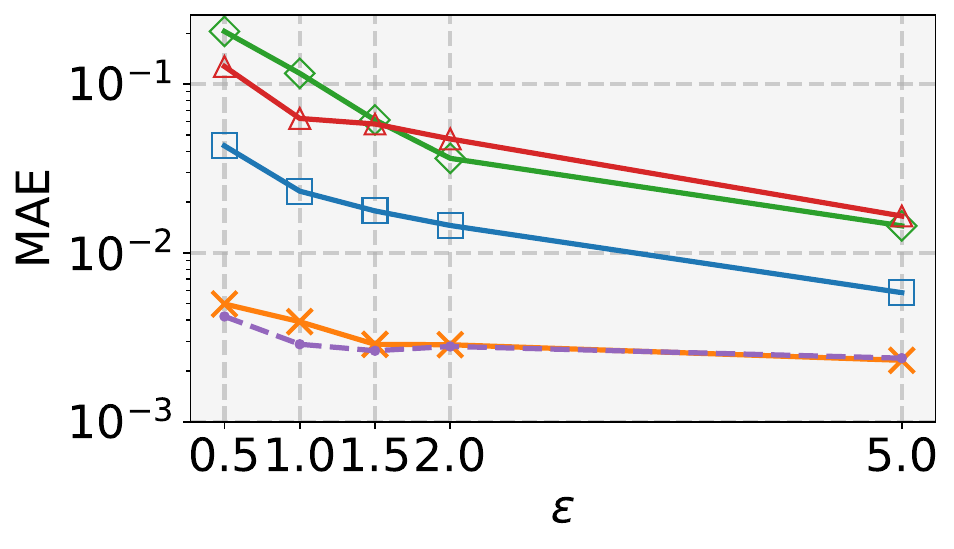}}
  \subfloat[Cosmetics ($\epsilon=1$)]{%
    \includegraphics[width=.25\linewidth]{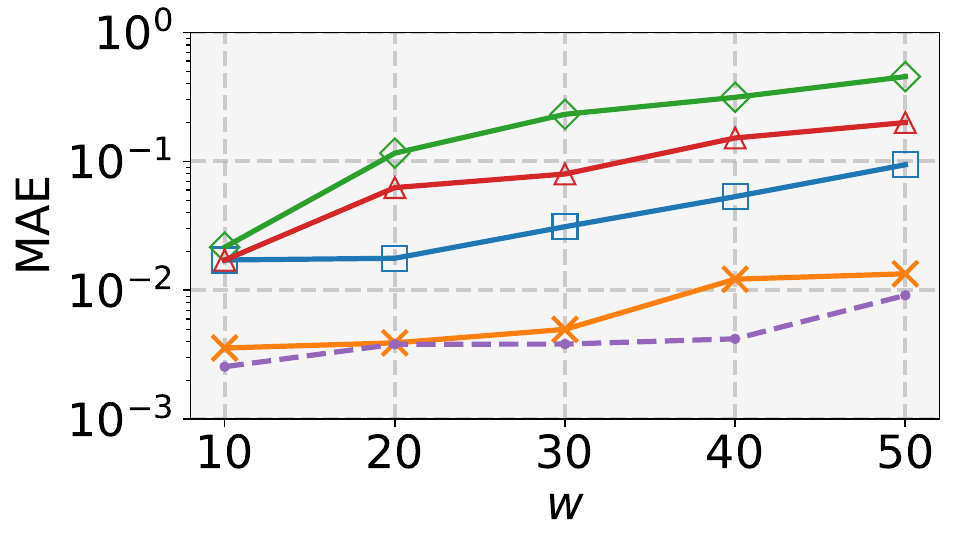}}

  \subfloat[Taxi ($w=20$)]{%
    \includegraphics[width=.25\linewidth]{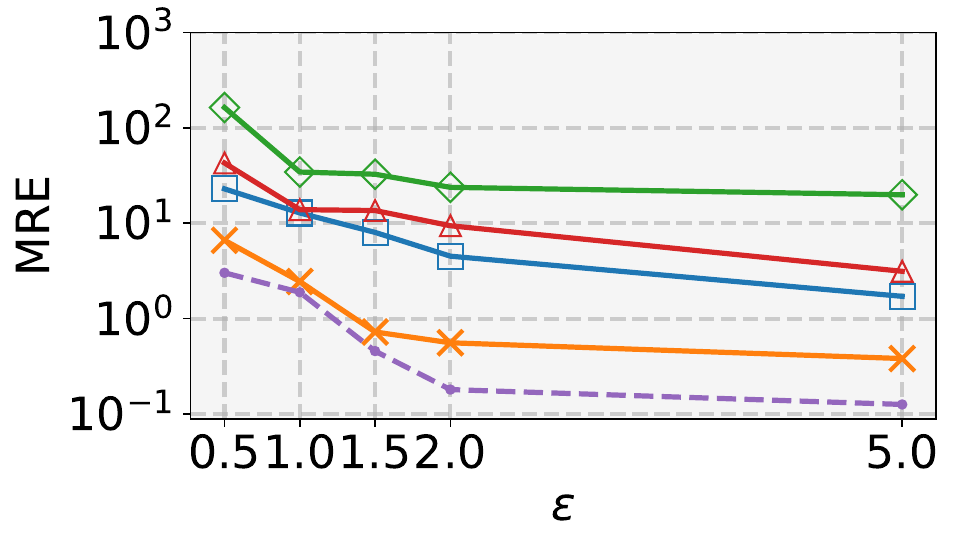}}
  \subfloat[Taxi ($\epsilon=1$)]{%
    \includegraphics[width=.25\linewidth]{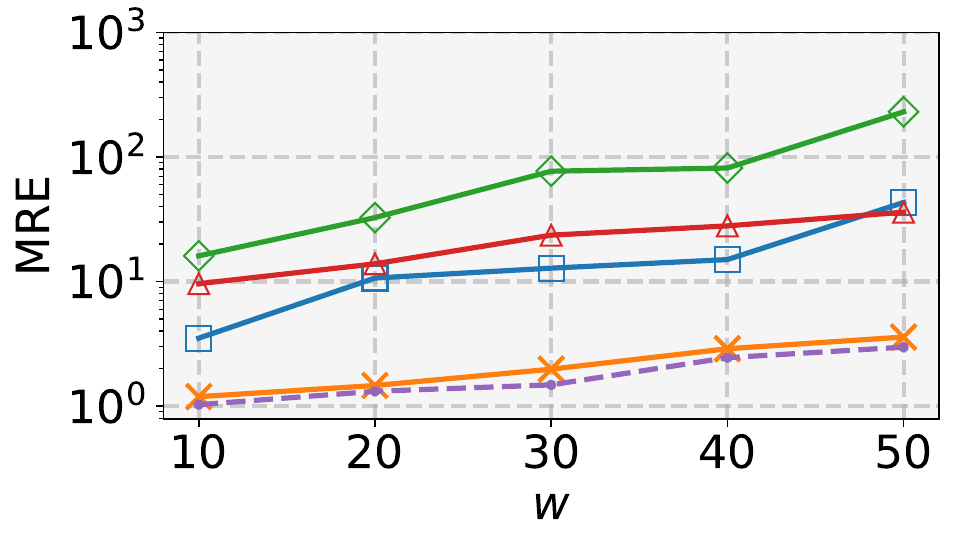}}
  \subfloat[Taxi ($w=20$)]{%
    \includegraphics[width=.25\linewidth]{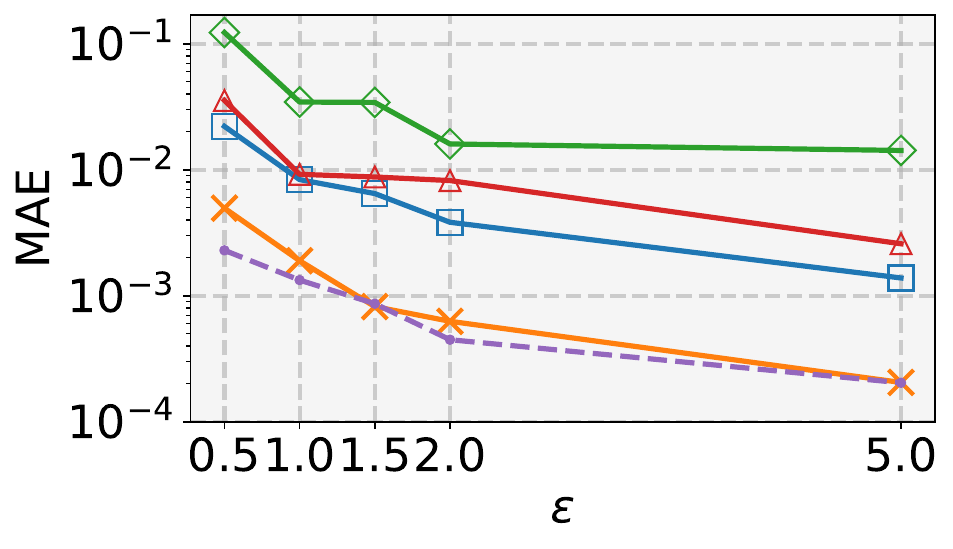}}
  \subfloat[Taxi ($\epsilon=1$)]{%
    \includegraphics[width=.25\linewidth]{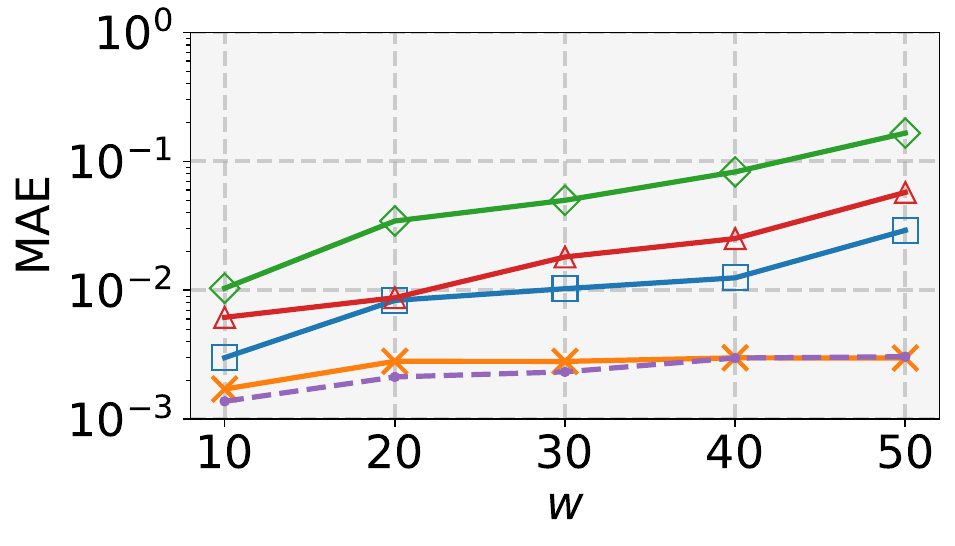}}

  \subfloat[Loan ($w=20$)]{%
    \includegraphics[width=.25\linewidth]{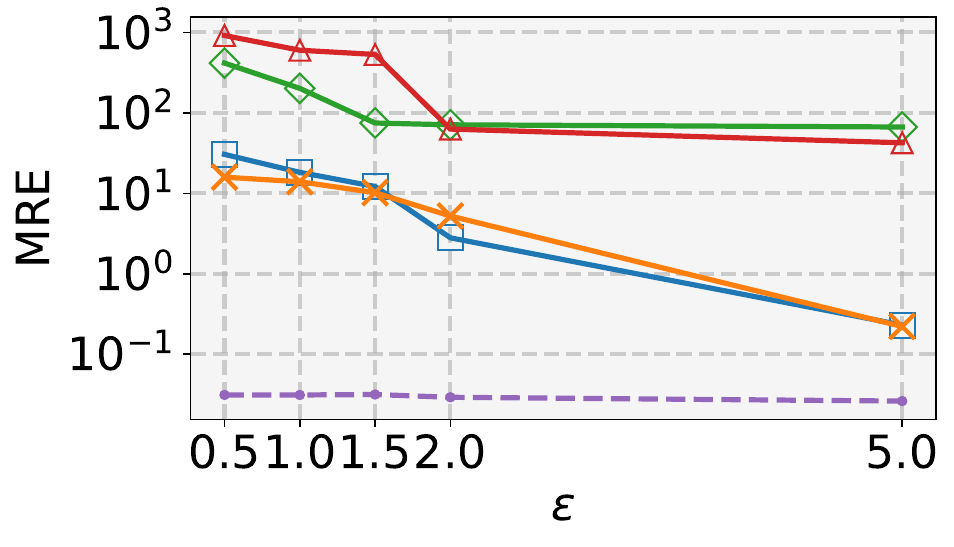}}
  \subfloat[Loan ($\epsilon=1$)]{%
    \includegraphics[width=.25\linewidth]{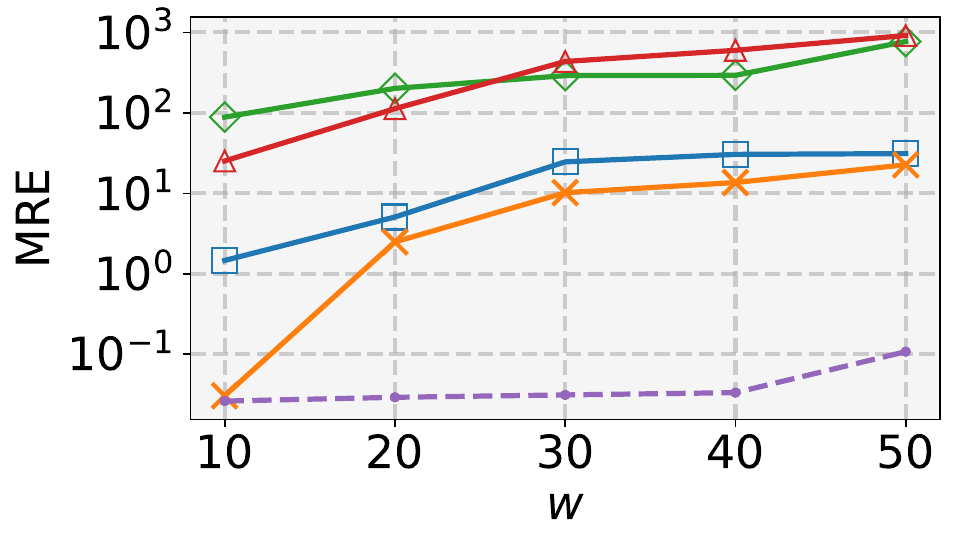}}
  \subfloat[Loan ($w=20$)]{%
    \includegraphics[width=.25\linewidth]{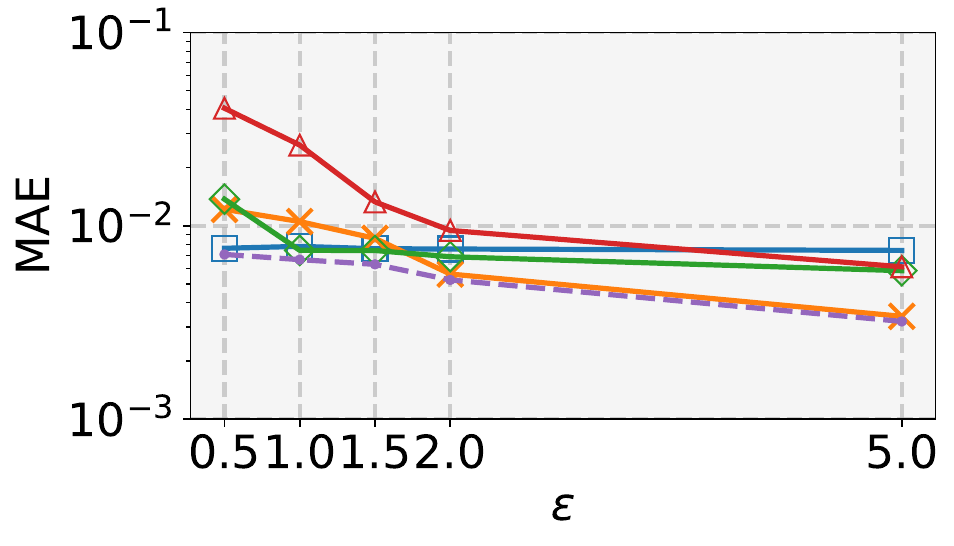}}
  \subfloat[Loan ($\epsilon=1$)]{%
    \includegraphics[width=.25\linewidth]{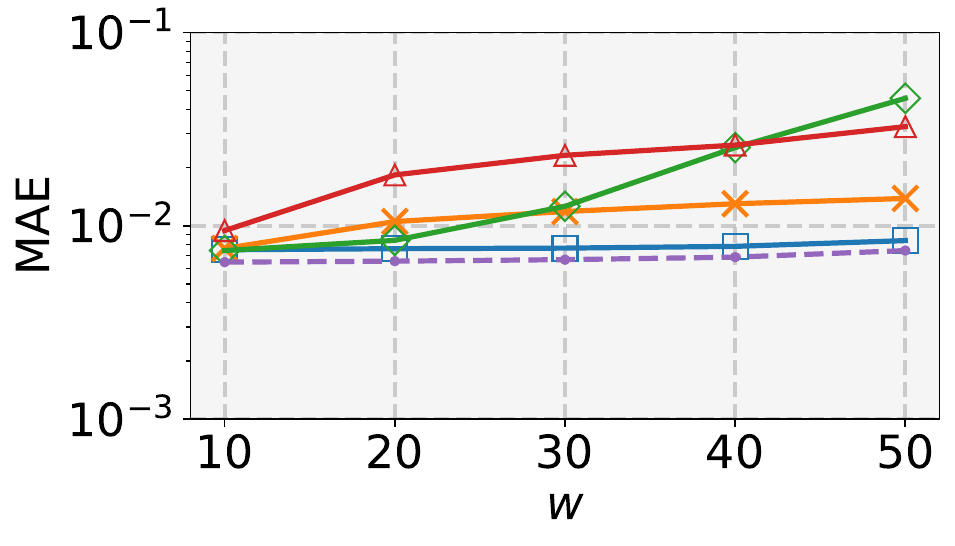}}

  \subfloat[Foursquare ($w=20$)]{%
    \includegraphics[width=.25\linewidth]{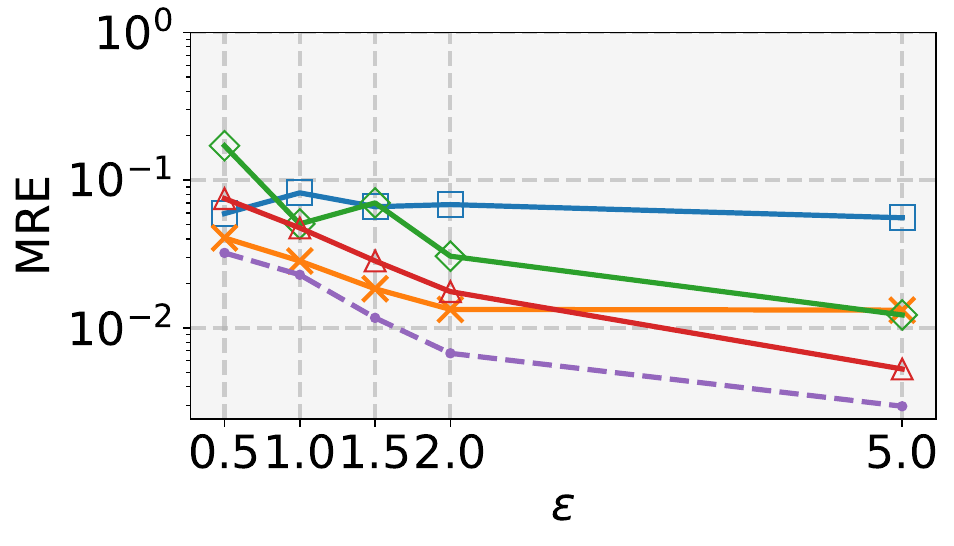}}
  \subfloat[Foursquare ($\epsilon=1$)]{%
    \includegraphics[width=.25\linewidth]{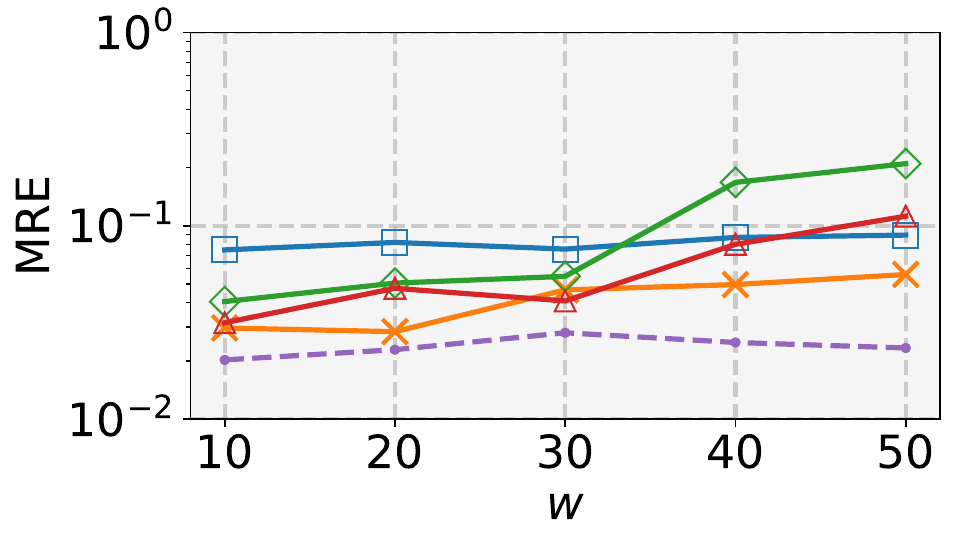}}
  \subfloat[Foursquare ($w=20$)]{%
    \includegraphics[width=.25\linewidth]{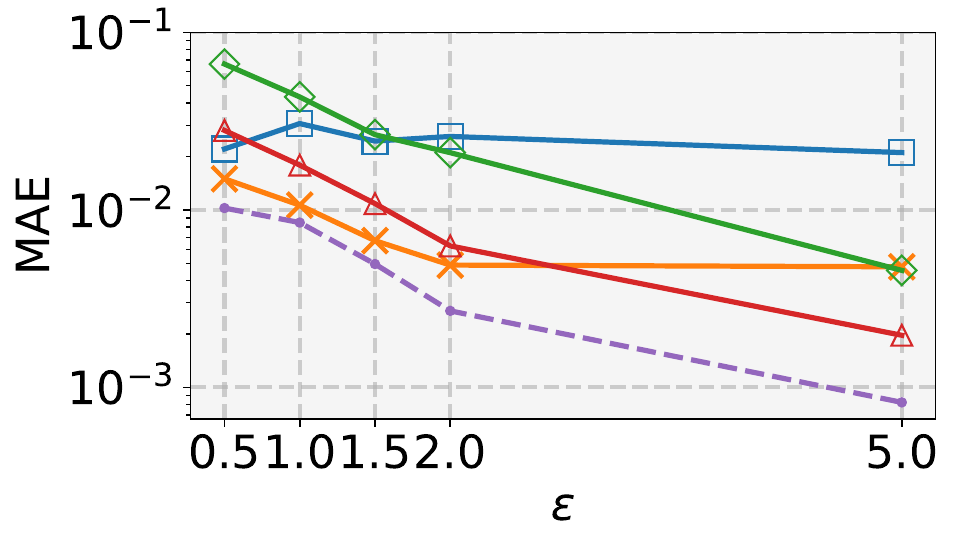}}
  \subfloat[Foursquare ($\epsilon=1$)]{%
    \includegraphics[width=.25\linewidth]{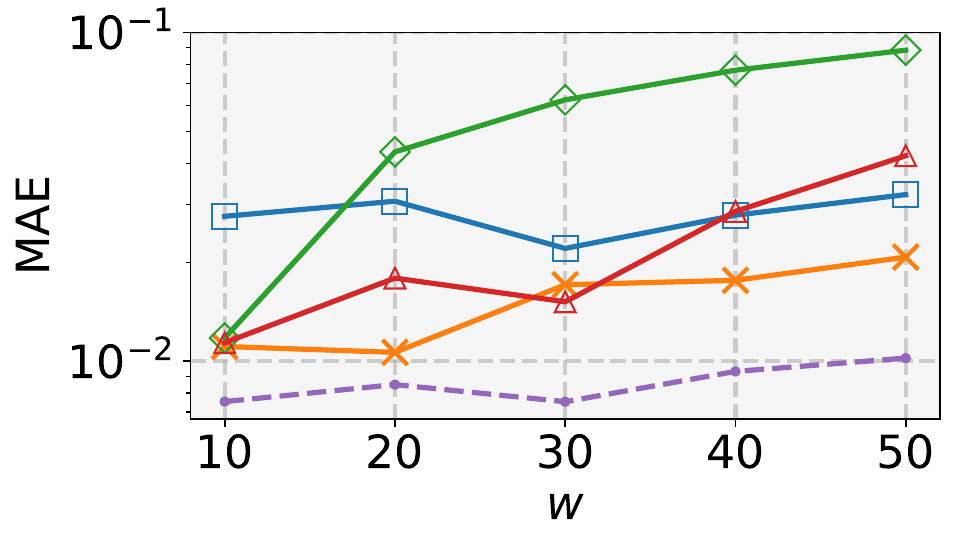}}
  \caption{Range query evaluation with varying $\epsilon$ and $w$}
  \label{fig:range_query}
  \Description{The figure presents a 4x4 grid of line charts evaluating the performance of range query. 
The rows correspond to four datasets: Cosmetics, Taxi, Loan, and Foursquare. 
The columns represent different metrics and varying parameters: the first and second columns show Mean Relative Error (MRE) varying epsilon and window size w, respectively; the third and fourth columns show Mean Absolute Error (MAE) varying epsilon and w. 
Five methods are compared: LBU, LSP, LBD, LBA, and the proposed MTSP-LDP. 
In all 16 subplots, the Y-axis is on a logarithmic scale. 
Visually, the proposed MTSP-LDP method (represented by a purple dashed line) consistently stays at the bottom of all charts, indicating it achieves the lowest error by a significant margin (often orders of magnitude) compared to the four baseline methods across all datasets and parameter settings.}
\end{figure*}

For baseline methods with frequency histogram estimates
$\{\hat{f}_{t,v}\}_{v\in\Omega}$, we can obtain the estimate of a frequency range query by
\[
  \hat{f}_{t,[v_1,v_2]}
  \triangleq \sum_{t'=t-\Delta+1}^t\sum_{v\in[v_1,v_2]} \hat{f}_{t',v}.
\]
For MTSP-LDP, the corresponding estimate is
\[
  \hat\theta_{t,[v_1,v_2]}
  \triangleq \frac{Q_t^r([v_1,v_2],\Delta)}{\sum_{t'=t-\Delta+1}^t n_{t'}}.
\]
We randomly generate $50$ range query tasks with varying value ranges
$[v_1,v_2]$ and time spans $\Delta$.
For each query, the MRE and MAE between the estimates and ground truths are
computed, and the average MRE and MAE across all queries are reported.
The results are shown in \cref{fig:range_query}.
Note that some query tasks span time ranges exceeding the predefined window
size, which means the total privacy budget used by these queries exceeds the set
budget for a single window.
However, this condition applies equally to all methods, ensuring a fair
comparison.

We observe that MTSP-LDP consistently achieves the lowest estimation errors
across all settings, with particularly pronounced improvements in MRE,
demonstrating superior accuracy under normalized error metrics.
This is because other approaches do not account for the complexity of performing
complex query tasks on streaming data and instead rely solely on publishing
frequency histograms.
While frequency histograms can be used to derive results for complex queries,
their cumulative error grows proportionally with the query range, leading to a
sharp decline in data utility.
In contrast, MTSP-LDP's data-adaptive hierarchical structure and cross-timestamp
processing enable more precise estimation for such queries, especially under
tight privacy budgets.

\subsubsection{Evaluation for Event Monitoring}

For event monitoring, we define a specific task where the goal is to detect the
change in the range query in the entire value domain $\Omega$ with threshold
$\vartheta$ equal to the median of $\Omega$ in each dataset.
The two functions in the range query are defined as follows.
\begin{align*}
  \alpha(\Omega,\Delta) &= Q_t^r(\Omega,\Delta)-Q_{t-w+1}^r(\Omega,\Delta)\\
  \beta(x,\vartheta) &= \indr{x>\vartheta}
\end{align*}
By analyzing the changes across consecutive windows, we ensure a consistent
comparison of each algorithm's ability to capture dynamic variations in the data
streams.
To better understand each method's performance, we vary the threshold
$\vartheta$, and obtain the ROC curve for each dataset, as illustrated in
\cref{fig:ROC_query}.

\begin{figure*}[t]
  \centering
  \includegraphics[width=.4\linewidth]{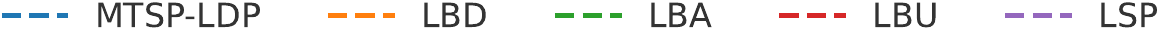}\\[-2ex]
  \subfloat[Cosmetics]{%
    \includegraphics[width=.22\linewidth]{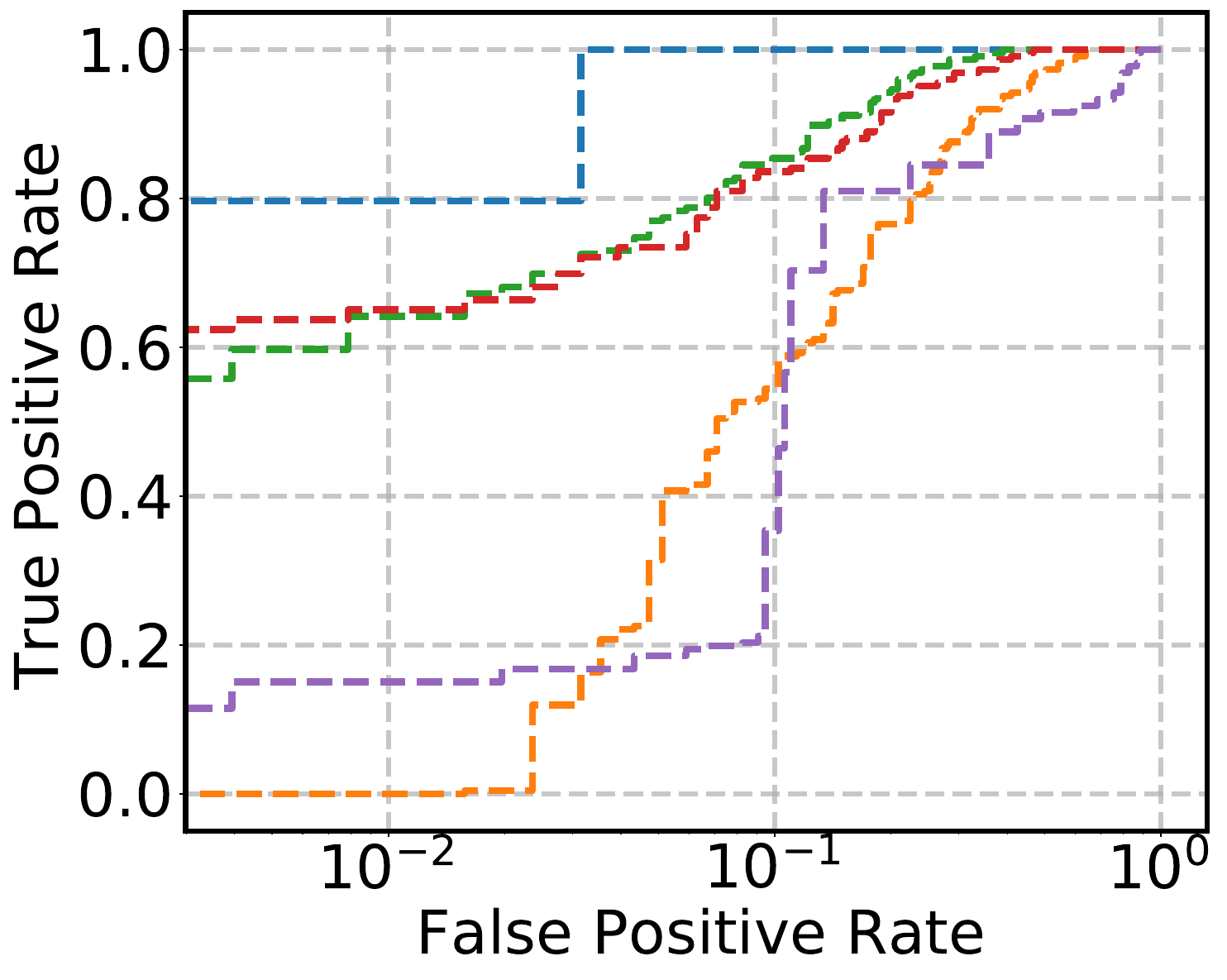}}\quad
  \subfloat[Taxi]{%
    \includegraphics[width=.22\linewidth]{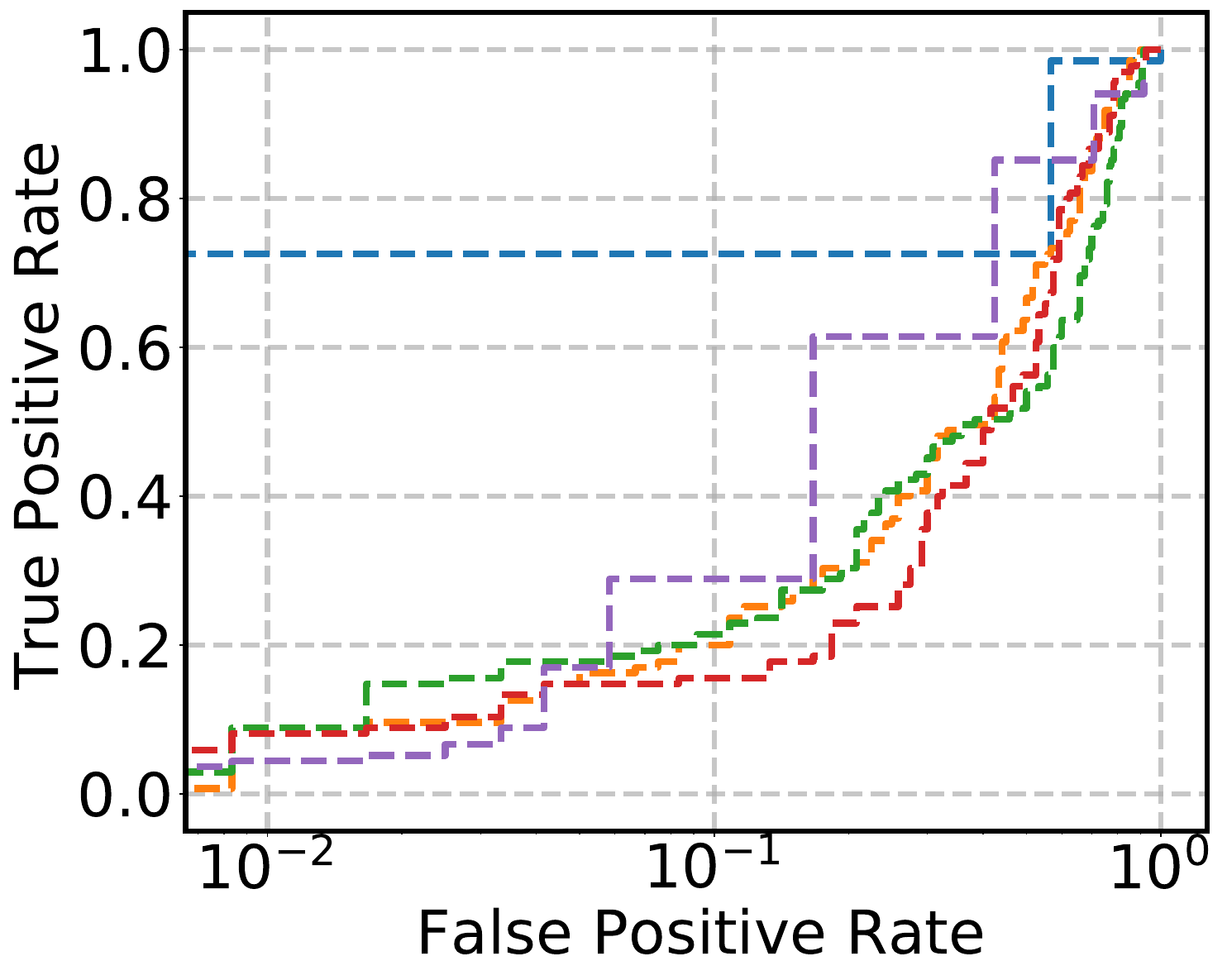}}\quad
  \subfloat[Loan]{%
    \includegraphics[width=.22\linewidth]{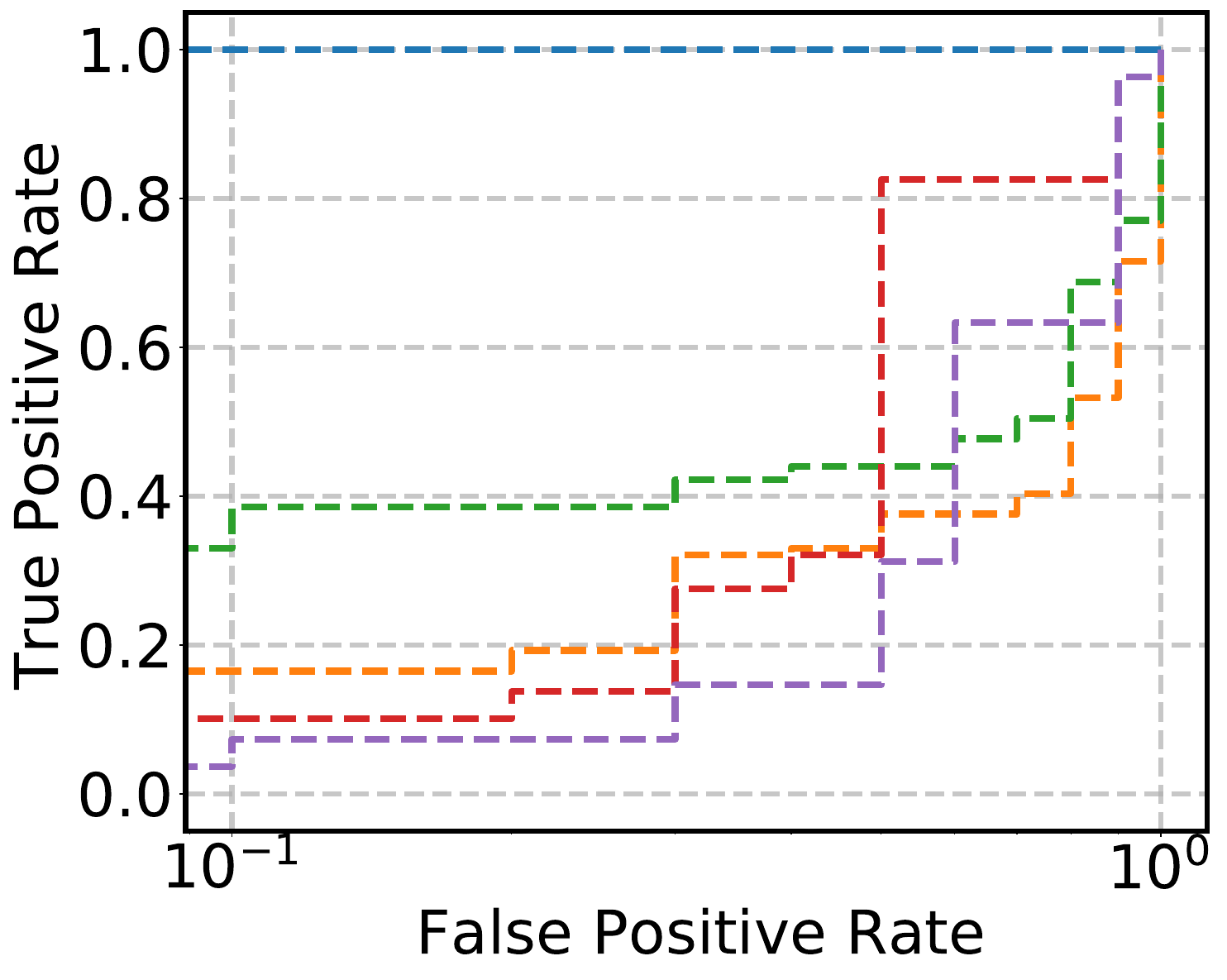}}\quad
  \subfloat[Foursquare]{%
    \includegraphics[width=.22\linewidth]{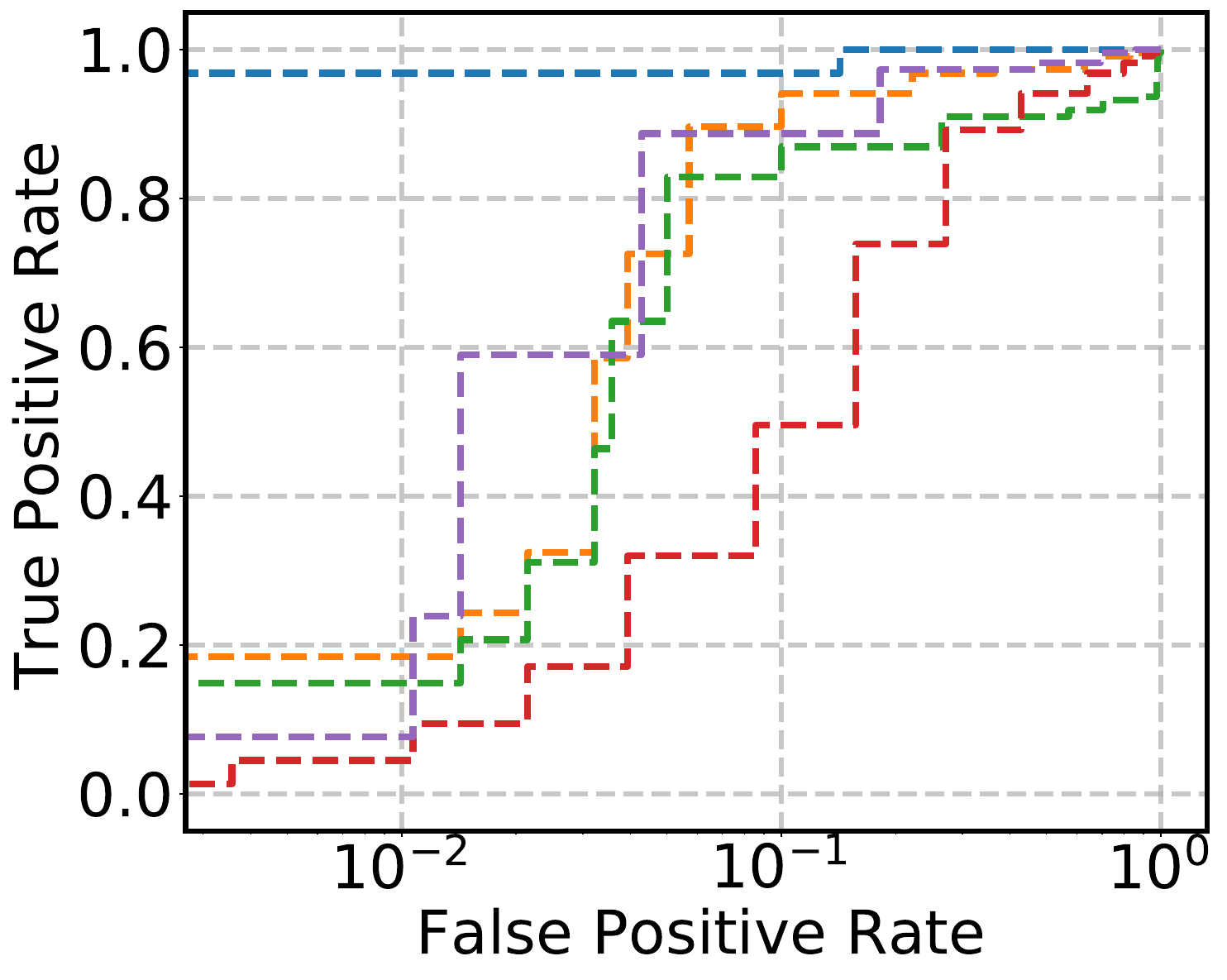}}
  \caption{ROC for event monitoring with $\epsilon=1$ and $w=20$}
  \label{fig:ROC_query}
  \Description{The figure displays four Receiver Operating Characteristic (ROC) curves corresponding to the Cosmetics, Taxi, Loan, and Foursquare datasets. 
The X-axis represents the False Positive Rate on a logarithmic scale, and the Y-axis represents the True Positive Rate. 
Five methods are compared: the proposed MTSP-LDP (blue dashed line) and four baselines (LBD, LBA, LBU, LSP). 
In all four subplots, the curve for MTSP-LDP rises most steeply and stays closest to the top-left corner, indicating it achieves the highest True Positive Rate while maintaining a low False Positive Rate. 
The baseline methods generally lag significantly behind, requiring much higher False Positive Rates to achieve comparable detection performance.}
\end{figure*}

We observe that MTSP-LDP consistently outperforms the other methods, attaining
near-perfect TPR across the full range of FPR.
This demonstrates the robustness and adaptability of MTSP-LDP to different data
distributions.
At low FPR levels (FPR$<0.1$), MTSP-LDP achieves significantly higher TPR
compared to other methods, highlighting its superior sensitivity under strict
false positive constraints.
In contrast, LBD and LBA exhibit limited performance, with slower TPR growth as
FPR increases.
While LSP shows relatively strong performance on the cosmetics dataset,
approaching MTSP-LDP in some regions, it still falls short of MTSP-LDP's
consistent accuracy across all datasets.

These results underscore the effectiveness of MTSP-LDP in addressing complex
query tasks under varying privacy constraints and data characteristics.
Its ability to achieve high accuracy across diverse datasets highlights its
suitability for real-world applications requiring stringent privacy guarantees
and robust performance.

\section{Related Work}
\label{sec:related}

We summarize some related work in the literature.

\paragraph{CDP Methods}
In the centralized setting, Dwork et al.~introduced event-level DP for
continuously releasing statistics like counts and histograms~\cite{Dwork2006}.
Hierarchical tree structures are widely used for temporal range queries under CDP. In such trees, leaf nodes store the noisy counts for each timestamp, while internal nodes store noisy sums over the intervals they cover.
For fixed-length binary streams, a binary tree structure was used to aggregate
counts~\cite{Dwork2010}, and later extended to infinite streams by Chan et
al.~through consistency constraints~\cite{Chan2011}.
To further reduce noise in sparse regions, Dwork proposed adaptive partitioning
based on thresholds~\cite{Dwork2015}.
However, this method is limited to post-partition release and lacks real-time
applicability.
Chen et al.~\cite{Chen2017} improved flexibility by modifying monitored events.
Cao et al.~\cite{Cao2015} introduced group-based histogram publishing, adding
Laplace noise to similar time slots.
Although these techniques support numerical data, they require raw data access
and thus cannot be directly applied under LDP.
Bao et al.~\cite{Bolot2013} addressed sliding-window predicate sum queries using
decay models.

For user-level DP, Fan et al.~\cite{Fan2013a} proposed FAST, a real-time
aggregation framework using adaptive sampling and filtering.
It was extended to 2D monitoring via spatial partitioning with
quadtrees~\cite{Fan2013}.
Rastogi et al.~\cite{Rastogi2010} developed a method based on discrete Fourier
transform (DFT), though it is suited only for offline analysis. {The analysis in \cite{2023userdp} reveals that on infinite streams, achieving user-level DP forces error to grow without bound as the maximum per-user prefix contribution increases over time. While they control privacy by truncating per-user contributions, this approach in turn introduces bias.

Kellaris et al.~\cite{Kellaris2014} introduced the $w$-event DP model to protect
sequences of events over sliding windows and proposed BA/BD mechanisms.
However, they rely on raw data to compute similarity, exposing them to inference
attacks.
Cao et al.~\cite{Cao2015} extended BD to support variable-length trajectories,
using a greedy strategy to match current data with historical outputs.
This resulted in uneven privacy budget distribution.
Du et al.~\cite{du2025} extended BD/BA to the personalized $w$-event privacy setting, where different users have heterogeneous privacy requirements, and proposed PBD and PBA.
Wang et al.~\cite{Wang2019} proposed E-RescueDP, which adaptively allocates
privacy budgets using an RNN to support real-time release.
Li et al.~\cite{Li2025} proposed SPAS, which predicts future data variation to
adaptively determine data sampling and privacy budget allocation under $w$-event
DP.

\paragraph{LDP Methods}
Most existing LDP works focus on single-value counting and frequency estimation
of static data, with only a few studies focusing on stream data analysis.
Memoization-based techniques~\cite{Ding2017, Erlingsson2014, Arcolezi2021} were
proposed to offer longitudinal LDP guarantees.
Joseph et al.~\cite{Joseph2018} estimated stream means by having users compare
their local averages against the last released value and vote on updates.
Although satisfying event-level LDP, this method assumes Bernoulli input and
temporal independence, limiting its generality.
THRESH further assumes that the number of global updates is bounded by
distribution changes, making it unsuitable for infinite streams.
Wang et al.~\cite{Wang2021} proposed an event-level LDP framework for interval
sum estimation using a hybrid mechanism with thresholding.
However, it directly extends CDP-style hierarchies, offering limited protection
for unbounded streams.
Bao et al.~\cite{Bao2021} leveraged autocorrelation to reduce noise via an
analytic Gaussian mechanism, but their method applies only to finite data and
achieves approximate $(\epsilon, \delta)$-LDP under periodic budget renewal.
Erlingsson et al.~\cite{Erlingsson2019} introduced a shuffling model for
correlated time series under user-level LDP, but their approach assumes
integer-valued inputs with bounded updates, and the hierarchical design
restricts scalability to infinite streams.

\section{Conclusion} \label{sec:conclusions}

We propose MTSP-LDP, a $w$-event LDP framework to handle infinite data streams
and support multiple streaming query tasks, including counting queries, range
queries, and event monitoring.
MTSP-LDP leverages a novel OBA algorithm that can dynamically allocate privacy
budgets within a window.
MTSP-LDP then constructs a private data-adaptive tree to support complex queries
more accurately.
Experiments on real-world datasets demonstrate that MTSP-LDP significantly
outperforms state-of-the-art methods.
Future work could focus on further enhancing MTSP-LDP, such as extending the
framework to handle multidimensional data streams and optimizing its
adaptability for even broader application scenarios.


\bibliographystyle{ACM-Reference-Format}
\bibliography{ref}

\end{document}